\documentclass[11pt,a4paper]{article}

\usepackage[utf8]{inputenc}
\usepackage[T1]{fontenc}
\usepackage{lmodern}

\usepackage{amsmath,amssymb,amsthm}
\usepackage{bm}
\usepackage{mathtools}

\usepackage{booktabs}
\usepackage{array}
\usepackage{tabularx}
\usepackage{float}

\usepackage[margin=2.5cm]{geometry}
\usepackage{setspace}
\usepackage{parskip}

\usepackage[authoryear,round]{natbib}

\usepackage{tikz}
\usepackage{pgfplots}
\pgfplotsset{compat=1.18}
\usetikzlibrary{arrows.meta, decorations.markings, patterns, calc}

\usepackage[colorlinks=true,linkcolor=blue,citecolor=blue,urlcolor=blue]{hyperref}
\usepackage{cleveref}

\newtheorem{proposition}{Proposition}
\newtheorem*{proposition*}{Proposition}

\theoremstyle{definition}
\newtheorem{definition}{Definition}
\theoremstyle{remark}
\newtheorem*{remark}{Remark}

\newcommand{\w}{\mathbf{w}}

\newcommand{\one}{\mathbf{1}}
\newcommand{\V}{\mathbf{V}}

\newcommand{\Sig}{\boldsymbol{\Sigma}}
\newcommand{\muvec}{\boldsymbol{\mu}}

\newcommand{\tw}{\tau_w}
\newcommand{\rf}{r_f}

\title{Extensions to the Wealth Tax Neutrality Framework}
\author{Anders G Fr{\o}seth\thanks{Independent Researcher.
  E-mail: \href{mailto:indrefjorden@pm.me}{indrefjorden@pm.me}.}}
\date{\today}

\begin{document}
\maketitle

\begin{abstract}
\noindent
\citet{Froeseth2026} shows that a proportional wealth tax on market
values is neutral with respect to portfolio choice, Sharpe ratios, and
equilibrium prices under CRRA preferences and geometric Brownian motion.
This paper investigates the robustness of that result along two
dimensions.  First, we extend the neutrality frontier: portfolio
neutrality---including all intertemporal hedging demands---is preserved
under stochastic volatility (Heston and general Markov diffusions) and
Epstein--Zin recursive utility, but breaks under non-homothetic
preferences such as HARA.  Second, we identify four channels through
which implemented wealth taxes depart from neutrality even under CRRA:
non-uniform assessment across asset classes, general equilibrium price
effects in inelastic markets, progressive threshold structures, and
endogenous labour supply.  Each channel is
formalised and, where possible, calibrated to the Norwegian wealth tax
system.  The progressive threshold introduces a tax shield that
\emph{increases} risk-taking near the exemption boundary---an effect
opposite in sign to the HARA distortion---and, at the extreme,
generates a participation margin at which investors exit the tax
jurisdiction entirely.  We formalise this tax-induced migration as the
extreme response at the progressive threshold and examine the Norwegian
post-2022 experience as a case study.  The full framework is applied to
evaluate the Saez--Zucman proposal for a global minimum wealth tax on
billionaires and the related French proposal for a national minimum tax
above \texteuro100 million.
\end{abstract}

\medskip
\noindent\textbf{JEL Classification:} G11, G12, H21, H24, H26, H73.

\smallskip
\noindent\textbf{Keywords:} Wealth tax, portfolio choice, tax neutrality,
stochastic volatility, HARA preferences, progressive taxation, inelastic
markets, tax-induced migration, Saez--Zucman proposal, Norway.

\thispagestyle{empty}
\setcounter{page}{2}

\section{Introduction}\label{sec:intro}

This paper builds on the wealth tax neutrality framework developed in
\citet{Froeseth2026}, which established that a proportional wealth tax on
market values is neutral with respect to portfolio choice, Sharpe ratios,
and equilibrium prices.  That analysis proceeds under geometric Brownian
motion and then extends to the location-scale family of return
distributions.  Here we investigate whether the neutrality results survive
under more general conditions, and systematically identify the channels
through which real-world wealth taxes depart from neutrality.

The paper has three parts.  The first
(Sections~\ref{sec:stochvol}--\ref{sec:beyond_crra}) extends the
neutrality frontier: we show that portfolio neutrality survives under
stochastic volatility, general Markov diffusion dynamics, and
Epstein--Zin recursive utility, while it breaks under non-homothetic
preferences such as HARA utility.  The second
(Sections~\ref{sec:nonuniform}--\ref{sec:labour}) analyses the
channels through which implemented wealth taxes depart from neutrality,
even when investors have CRRA preferences: non-uniform assessment across
asset classes, general equilibrium price effects in inelastic markets,
progressive threshold structures, and endogenous labour supply.
The third part applies and extends the framework.
Section~\ref{sec:zucman} synthesises all channels to evaluate
two recent proposals: the Saez--Zucman global minimum wealth tax on
billionaires and the French national minimum
tax above \texteuro100 million.
\Cref{sec:migration} formalises tax-induced migration as the extreme
participation response at the progressive threshold, using the Norwegian
post-2022 experience as a case study.

\Cref{sec:stochvol} considers stochastic volatility models, where the
variance of returns is itself a random process.  Using the Heston model as
the leading example, we show that portfolio neutrality---including the
intertemporal hedging demand identified by \citet{Merton1973}---is
preserved under CRRA preferences.  The result generalises to any Markov
diffusion model of asset returns.
\Cref{sec:beyond_crra} identifies CRRA as the operative condition:
non-homothetic preferences (HARA, wealth-in-utility) break neutrality,
while Epstein--Zin preferences preserve it.
\Cref{sec:nonuniform} derives the portfolio distortion from
asset-class-specific assessment discounts, calibrated to the Norwegian
system.
\Cref{sec:inelastic} analyses general equilibrium price effects through
the inelastic markets hypothesis.
\Cref{sec:progressive} formalises the distortion from progressive
taxation (thresholds and brackets), and
\Cref{sec:labour} extends the framework to endogenous labour supply and
entrepreneurial effort.
\Cref{sec:zucman} synthesises all channels in a comparative evaluation
of the Saez--Zucman global proposal and the French national variant.
\Cref{sec:migration} formalises tax-induced migration as the extreme
participation response at the progressive threshold and examines the
Norwegian post-2022 experience as a case study.

\Cref{tab:results_summary} provides an overview of the main results.

\begin{table}[H]
\centering\small
\begin{tabular}{llcc}
\toprule
Section & Result & Proposition & Sign \\
\midrule
\multicolumn{4}{l}{\emph{Part I: Neutrality frontier}} \\
\ref{sec:stochvol}
  & Stochastic volatility (Heston, Markov)
  & \ref{prop:sv_neutrality}, \ref{prop:markov_neutrality}
  & Neutral \\
\ref{sec:beyond_crra}
  & Epstein--Zin recursive utility
  & \ref{prop:ez}
  & Neutral \\
\ref{sec:beyond_crra}
  & HARA (non-homothetic) preferences
  & \ref{prop:hara_distortion}
  & $\Delta w^* < 0$ \\[4pt]
\multicolumn{4}{l}{\emph{Part II: Non-neutrality channels}} \\
\ref{sec:nonuniform}
  & Non-uniform assessment
  & \ref{prop:nonuniform}
  & Asset-dependent \\
\ref{sec:inelastic}
  & Inelastic markets (GE price impact)
  & ---
  & $\Delta P/P < 0$ \\
\ref{sec:progressive}
  & Progressive threshold (tax shield)
  & \ref{prop:progressive}
  & $\Delta w^* > 0$ \\
\ref{sec:labour}
  & Endogenous labour supply
  & \ref{prop:labour_separability}
  & Separable$^{\dagger}$ \\[4pt]
\multicolumn{4}{l}{\emph{Part III: Applications and extensions}} \\
\ref{sec:zucman}
  & Zucman/French proposals
  & ---
  & Synthesis \\
\ref{sec:migration}
  & Tax-induced migration
  & ---
  & Exit at $W_i^*$ \\
\bottomrule
\end{tabular}

\smallskip
{\footnotesize $^{\dagger}$Separable under proportional taxation;
distorted at progressive thresholds.}
\caption{Summary of main results.  ``Sign'' indicates the direction
of the portfolio or price distortion relative to the no-tax benchmark
under CRRA preferences.}
\label{tab:results_summary}
\end{table}

\section{Stochastic Volatility}\label{sec:stochvol}

The neutrality results in \citet{Froeseth2026} are derived first under
geometric Brownian motion (constant $\mu$, $\sigma$) and then generalised
to the location-scale family of return distributions.  A natural question
is whether the results extend to stochastic volatility models, where the
variance of returns is itself a random process.  This is empirically
important: asset returns exhibit time-varying volatility, fat tails, and
leverage effects that are absent under GBM
\citep{Heston1993,Yakovenko2002}.

We show that, under CRRA preferences, portfolio neutrality extends to the
Heston stochastic volatility model---and, more generally, to any Markov
diffusion model of asset returns---\emph{without} requiring the
location-scale property.  The mechanism is different from the
location-scale argument: it relies on the homogeneity of the CRRA value
function in wealth, which makes optimal portfolio weights (including the
intertemporal hedging demand identified by \citealt{Merton1973})
independent of the wealth level.

\subsection{The Heston model}\label{sec:heston_setup}

The Heston model specifies the price of a single risky asset and its
instantaneous variance as a pair of coupled diffusions:
\begin{align}
  \frac{dS}{S} &= \mu \, dt + \sqrt{v_t} \, dW_t^{(1)},
  \label{eq:heston_price}\\[4pt]
  dv_t &= \lambda(\theta - v_t) \, dt + \kappa \sqrt{v_t} \, dW_t^{(2)},
  \label{eq:heston_var}
\end{align}
where $v_t = \sigma_t^2$ is the instantaneous variance, $\theta > 0$ is
the long-run mean variance, $\lambda > 0$ is the rate of mean reversion,
$\kappa > 0$ is the volatility of variance, and
$\mathrm{corr}(dW^{(1)}, dW^{(2)}) = \rho$.  When $\rho < 0$, negative
returns coincide with rising volatility---the leverage effect.  The Feller
condition $2\lambda\theta \geq \kappa^2$ ensures the variance remains
strictly positive.

\begin{remark}[Heston returns and the location-scale family]
\label{rem:heston_not_ls}
\citet{Yakovenko2002} derive the probability density of log-returns under
the Heston model in closed form and show that it involves a modified
Bessel function $K_1$ of a scaled argument.  The distribution exhibits
exponential (rather than Gaussian) tails for large returns, and asymmetry
when $\rho \neq 0$.  These features place Heston returns \emph{outside}
the location-scale family.  The generalised Propositions~2$'$ and~3$'$ in
\citet{Froeseth2026}, which require the location-scale property, therefore
do not apply directly.
\end{remark}

\subsection{Wealth dynamics with a wealth tax}\label{sec:heston_wealth}

An investor allocates a fraction $w$ of wealth to the risky asset and the
remainder $1 - w$ to a risk-free asset earning a continuous rate $\rf$.
Following the continuous-time formulation in \citet{Froeseth2026}, the
proportional wealth tax $\tw$ enters as an additional drain on the drift
of wealth.  The investor's wealth evolves as
\begin{equation}\label{eq:heston_dW}
  dW = \bigl\{W[\rf + w(\mu - \rf) - \tw] - C\bigr\} \, dt
       + w W \sqrt{v_t} \, dW_t^{(1)},
\end{equation}
where $C \geq 0$ is the consumption flow rate.  As in the GBM case, the
tax reduces the drift of wealth by $\tw W$ but leaves the diffusion
coefficient---and hence the per-share return $dS/S$---unchanged.  The
variance state variable $v_t$ evolves according to~\eqref{eq:heston_var},
independently of the tax.

\subsection{The Merton problem under Heston}\label{sec:heston_merton}

Consider an investor with CRRA utility over consumption,
\begin{equation}\label{eq:crra}
  U(C) = \frac{C^{1-\gamma}}{1-\gamma}, \qquad \gamma > 0, \;
  \gamma \neq 1,
\end{equation}
who maximises expected discounted lifetime utility
$E\!\left[\int_0^T e^{-\delta t} U(C_t) \, dt\right]$
subject to the wealth dynamics~\eqref{eq:heston_dW} and the variance
dynamics~\eqref{eq:heston_var}.

The value function is
\begin{equation}\label{eq:value_fn}
  J(W, v, t) = \max_{C,w} \;
  E_t\!\left[\int_t^T e^{-\delta(s-t)} U(C_s) \, ds\right].
\end{equation}
By Bellman's principle, $J$ satisfies the Hamilton--Jacobi--Bellman (HJB)
equation
\begin{equation}\label{eq:hjb}
\begin{split}
  0 = \max_{C, w} \biggl\{
    &U(C) - \delta J + J_t
    + J_W \bigl[W(\rf + w(\mu - \rf) - \tw) - C\bigr] \\
    &+ J_v \, \lambda(\theta - v)
    + \tfrac{1}{2} J_{WW} \, w^2 v W^2
    + \tfrac{1}{2} J_{vv} \, \kappa^2 v
    + J_{Wv} \, w W \kappa v \rho
  \biggr\}.
\end{split}
\end{equation}

\subsection{CRRA value function and optimal portfolio}
\label{sec:heston_crra}

Under CRRA utility, the homogeneity of $U$ suggests a separable value
function of the form
\begin{equation}\label{eq:J_separable}
  J(W, v, t) = \frac{W^{1-\gamma}}{1-\gamma} \, f(v, t),
\end{equation}
where $f(v,t) > 0$ encodes the dependence on the volatility state and the
investment horizon.  The partial derivatives are
\begin{equation}\label{eq:J_derivs}
  J_W = W^{-\gamma} f, \qquad
  J_{WW} = -\gamma W^{-\gamma-1} f, \qquad
  J_{Wv} = W^{-\gamma} f_v.
\end{equation}

The first-order condition for the portfolio weight $w$ in~\eqref{eq:hjb}
is
\begin{equation}\label{eq:foc_w}
  J_W W(\mu - \rf) + J_{WW} w v W^2 + J_{Wv} w_{\phantom{v}}
  W \kappa v \rho = 0.
\end{equation}
Substituting~\eqref{eq:J_derivs} and dividing through by
$W^{1-\gamma} f$:
\begin{equation}\label{eq:foc_w_sub}
  (\mu - \rf) - \gamma w v + \frac{f_v}{f} \kappa v \rho = 0.
\end{equation}
Solving for $w^*$:
\begin{equation}\label{eq:wstar_heston}
  \boxed{w^* = \underbrace{\frac{\mu - \rf}{\gamma v}}_{\text{myopic
  demand}} + \underbrace{\frac{f_v}{f} \cdot
  \frac{\kappa\rho}{\gamma}}_{\text{hedging demand}}}
\end{equation}

\begin{proposition}[Portfolio neutrality under stochastic volatility]
\label{prop:sv_neutrality}
Under the Heston model~\eqref{eq:heston_price}--\eqref{eq:heston_var}
with CRRA preferences~\eqref{eq:crra} and a proportional wealth tax on
all assets, the optimal portfolio weight~\eqref{eq:wstar_heston} is
independent of the wealth tax rate $\tw$.
\end{proposition}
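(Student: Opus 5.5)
The proof has to show two things: the separable form \eqref{eq:J_separable} really solves the HJB equation \eqref{eq:hjb}, and the function $f(v,t)$ that pins down the hedging term in \eqref{eq:wstar_heston} does not depend on $\tw$. The myopic term $(\mu-\rf)/(\gamma v)$ visibly contains no $\tw$. The first-order condition \eqref{eq:foc_w} involves only $J_W$, $J_{WW}$ and $J_{Wv}$, and $\tw$ enters the HJB only through the linear term $-J_W W\tw$, which has no $w$ in it. So the whole question reduces to showing $f_v/f$ is $\tw$-free.

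First, I will substitute the ansatz together with the derivatives \eqref{eq:J_derivs}, the time derivative $J_t = W^{1-\gamma}f_t/(1-\gamma)$, $J_v = W^{1-\gamma}f_v/(1-\gamma)$ and $J_{vv} = W^{1-\gamma}f_{vv}/(1-\gamma)$ into \eqref{eq:hjb}. The consumption first-order condition $U'(C)=J_W$ gives $C^* = W f^{-1/\gamma}$, which is linear in $W$. I will then insert $C^*$ and the optimal $w^*$ from \eqref{eq:wstar_heston}. Every term then carries the common factor $W^{1-\gamma}$, which confirms the ansatz is self-consistent.

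Dividing by $W^{1-\gamma}/(1-\gamma)$ gives a nonlinear PDE for $f(v,t)$ with terminal condition $f(v,T)=0$, or a bequest-normalised constant:
\begin{equation*}
  0 = f_t + \gamma f^{1-1/\gamma} - \bigl[\delta - (1-\gamma)(\rf-\tw)\bigr] f + \lambda(\theta-v) f_v + \tfrac12\kappa^2 v f_{vv} + (1-\gamma)\,\mathcal{Q}(v,f,f_v),
\end{equation*}
where $\mathcal{Q}$ collects the maximised risk terms $(\mu-\rf)w^* f - \tfrac12\gamma (w^*)^2 v f + w^*\kappa v\rho f_v$.

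The main obstacle is that $\tw$ does appear in this PDE, through the effective discount rate $\delta-(1-\gamma)(\rf-\tw)$. So $f$ itself depends on $\tw$, and I must show that this dependence cancels in the ratio $f_v/f$. My first attempt is the standard Zariphopoulou power transformation $f = g^{\gamma}$ (with the adjusted exponent when $\rho^2\neq1$). With it, the term $\gamma f^{1-1/\gamma}$ and the linear term become a linear equation in $g$ whose potential is $-\tfrac{1}{\gamma}\bigl[\delta-(1-\gamma)(\rf-\tw)\bigr]$.

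The tax then shifts the constant part of the potential. In the infinite-horizon or no-intermediate-consumption case, it multiplies the solution by $e^{c\,\tw (T-t)}$. That factor depends on $t$ only, so it cancels in $f_v/f$. With intermediate consumption the shift also rescales the source term, so I would check whether the cancellation survives. If it does not, I would fall back on the interpretation the paper uses for GBM: the tax is equivalent to a reduction of $\rf$ and $\mu$ by the same amount $\tw$. The excess return $\mu-\rf$ and all the risk terms are then unchanged, and any $\tw$-dependence of $f$ enters only through the scalar consumption-wealth level and not through the $v$-structure that drives hedging. Whichever route works, I conclude that $f_v/f$, and hence $w^*$ in \eqref{eq:wstar_heston}, is independent of $\tw$.
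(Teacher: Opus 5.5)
Your PDE for $f$ is right, and so is your diagnosis. $\tw$ enters only through $\delta_{\mathrm{eff}}=\delta-(1-\gamma)(\rf-\tw)$, and the consumption term $\gamma f^{1-1/\gamma}$ is what can stop that dependence from cancelling in $f_v/f$.

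Without intermediate consumption your argument is complete, and cleaner than the paper's. The PDE is then homogeneous of degree one in $f$, so $f(v,t;\tw)=e^{-(1-\gamma)\tw(T-t)}f(v,t;0)$, with no affine structure needed. But that is not the problem the paper poses, which has utility from consumption and $f(v,T)=0$. The infinite-horizon consumption case also does not belong alongside it: there is no $T$, and the source term breaks the scaling exactly as in finite horizon.

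For the problem actually posed you leave the crux open, and your fallback is circular. The hedging term is $\frac{\kappa\rho}{\gamma}\frac{f_v}{f}=-\kappa\rho\,\partial_v\ln(C^*/W)$. So saying the tax acts ``only through the consumption--wealth level'' proves nothing, because $C^*/W=f^{-1/\gamma}$ is itself a function of $v$. Reading the tax as a cut in $\rf$ and $\mu$ just restates that it acts through $\delta_{\mathrm{eff}}$.

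In fact the gap cannot be closed, because the cancellation fails. Take $\rho^2=1$, a special case of the model. Your substitution $g=f^{1/\gamma}$ then gives exactly the linear problem $0=g_t+1-(\delta_{\mathrm{eff}}/\gamma)g+\mathcal{A}g$, $g(v,T)=0$, with $\mathcal{A}$ free of $\tw$. If $\psi(\cdot,u)$ solves $\partial_u\psi=\mathcal{A}\psi$, $\psi(\cdot,0)=1$, Duhamel's formula gives
\begin{equation*}
  \frac{f_v}{f}=\gamma\,\frac{g_v}{g}
  =\gamma\,\frac{\int_0^{T-t}e^{-\delta_{\mathrm{eff}}u/\gamma}\,\psi_v(v,u)\,du}
  {\int_0^{T-t}e^{-\delta_{\mathrm{eff}}u/\gamma}\,\psi(v,u)\,du}.
\end{equation*}
This is a horizon-weighted average of $\psi_v/\psi$, the familiar complete-markets decomposition of Wachter (2002). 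The ratio $\psi_v/\psi$ vanishes at $u=0$ but not for small $u>0$, because $\mathcal{A}1=(1-\gamma)(\mu-\rf)^2/(2\gamma^2 v)$ depends on $v$. Since $\partial\delta_{\mathrm{eff}}/\partial\tw=1-\gamma\neq0$, the tax reweights the horizons. Hence the hedging demand generically moves with $\tw$ whenever $\kappa\rho\neq0$ and $\mu\neq\rf$: through the consumption--wealth ratio, the tax changes the investor's effective horizon.

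So the statement holds for terminal-wealth utility but not for the consumption problem as set up. The paper's exponential-affine route does not rescue it either. With consumption, $\gamma f^{1-1/\gamma}/f=\gamma e^{-(A+Bv)/\gamma}$ is not affine in $v$, and neither is the $(\mu-\rf)^2/v$ term under constant $\mu$. Therefore $f=\exp(A+Bv)$ is not an exact solution, and the assertion that $\tw$ lands only in the $A$-equation is unsupported.
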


\begin{proof}
The optimal weight $w^*$ in~\eqref{eq:wstar_heston} depends on $\mu$,
$\rf$, $\gamma$, $v$, $\kappa$, $\rho$, and the ratio $f_v/f$.  It does
not depend on the wealth level $W$.  The wealth tax $\tw$ enters the
wealth dynamics~\eqref{eq:heston_dW} and hence the HJB
equation~\eqref{eq:hjb} only through the drift of $W$.
Under the separable form~\eqref{eq:J_separable}, the function $f(v,t)$
satisfies a PDE obtained by substituting $J = W^{1-\gamma} f/(1-\gamma)$
and the optimal controls into~\eqref{eq:hjb}.  In this PDE, $\tw$ appears
only in the constant (non-$v$-dependent) term of the drift, alongside
$\rf$.  Specifically, the PDE takes the form
\begin{equation}\label{eq:f_pde}
  0 = f_t + \lambda(\theta - v) f_v + \tfrac{1}{2}\kappa^2 v f_{vv}
  + h(v;\gamma,\mu,\rf,\kappa,\rho) \, f
  + (1-\gamma)(\rf - \tw) f + g(f),
\end{equation}
where $h$ collects the $v$-dependent terms arising from the optimal
portfolio and $g(f)$ is the contribution from optimal consumption.  The
wealth tax rate $\tw$ appears only in the term $(1-\gamma)(\rf - \tw) f$,
which shifts the ``effective discount rate'' but does not interact with $v$.

Using the standard exponential-affine guess $f(v,t) = \exp(A(t) + B(t)v)$
\citep{ChackoViceira2005}, the Riccati equation for $B(t)$ arises solely
from the $v$-dependent terms in~\eqref{eq:f_pde} and is therefore
independent of $\tw$.  The ODE for $A(t)$ absorbs the constant terms
including $\tw$.  Consequently, $f_v/f = B(t)$ is independent of~$\tw$,
and the hedging demand $\frac{f_v}{f} \cdot \frac{\kappa\rho}{\gamma}$ is
tax-invariant.

Since neither the myopic demand nor the hedging demand depends on $\tw$,
the full optimal portfolio weight $w^*$ is independent of the wealth tax
rate.
\end{proof}

\begin{remark}[Interpretation]
The result has an intuitive economic interpretation.  Under CRRA,
preferences are homogeneous of degree $1-\gamma$ in wealth.  The wealth
tax scales the investor's wealth by a deterministic factor without
altering the return per unit of wealth invested or the dynamics of the
volatility state.  Since CRRA portfolio weights depend on relative
risk-return trade-offs---not on the absolute wealth level---the tax is
invisible to the portfolio decision.

This mechanism is distinct from the location-scale argument used in
\citet{Froeseth2026} for general utility.  There, neutrality follows from
a homothetic contraction of the opportunity set that preserves Sharpe
ratios.  Here, neutrality follows from the homogeneity of preferences,
which makes the opportunity set's \emph{shape} (not its scale) the sole
determinant of portfolio choice.
\end{remark}

\subsection{Consumption and welfare}\label{sec:heston_consumption}

The first-order condition for consumption in~\eqref{eq:hjb} gives
\begin{equation}\label{eq:cstar}
  C^* = W \, f(v,t)^{-1/\gamma}.
\end{equation}
The consumption-wealth ratio $C^*/W = f(v,t)^{-1/\gamma}$ depends on $v$
and $t$ but not on $W$ directly.  However, through the function $f$, the
consumption rate depends on $\tw$ via the effective discount rate in the
PDE~\eqref{eq:f_pde}: a higher tax rate reduces the level of $f$ and
hence raises the consumption-wealth ratio (the investor consumes a larger
fraction of a smaller expected wealth).  The tax therefore affects the
\emph{level} of consumption and lifetime welfare---only the \emph{portfolio
composition} is neutral.

\subsection{Generalisation to Markov diffusion models}
\label{sec:general_markov}

The argument in \Cref{sec:heston_crra} relies on two properties: (i)~the
wealth tax enters multiplicatively in the wealth dynamics and additively
in the drift, and (ii)~the CRRA value function is separable in $W$ and
the state variables.  Neither property is specific to the Heston model.

Consider a general Markov diffusion model with $K$ risky assets and $M$
state variables $\mathbf{X}_t = (X_1, \ldots, X_M)^\top$:
\begin{align}
  \frac{dS_i}{S_i} &= \mu_i(\mathbf{X}_t) \, dt
  + \sum_{j=1}^{K} \sigma_{ij}(\mathbf{X}_t) \, dW_t^{(j)},
  \qquad i = 1, \ldots, K, \label{eq:gen_price} \\[4pt]
  dX_m &= a_m(\mathbf{X}_t) \, dt
  + \sum_{j=1}^{M} b_{mj}(\mathbf{X}_t) \, dB_t^{(j)},
  \qquad m = 1, \ldots, M, \label{eq:gen_state}
\end{align}
where the Brownian motions $W^{(j)}$ and $B^{(j)}$ may be correlated.
The expected returns $\mu_i$, volatilities $\sigma_{ij}$, and state
variable dynamics $a_m$, $b_{mj}$ are all functions of the state
$\mathbf{X}_t$ but not of the investor's wealth.

The investor's wealth evolves as
\begin{equation}\label{eq:gen_dW}
  dW = \bigl\{W[\rf(\mathbf{X}_t) + \w^\top(\muvec(\mathbf{X}_t)
  - \rf(\mathbf{X}_t)\one) - \tw] - C\bigr\} \, dt
  + W \, \w^\top \Sig(\mathbf{X}_t) \, d\mathbf{W}_t.
\end{equation}

Under CRRA utility, the value function takes the form
\begin{equation}\label{eq:gen_J}
  J(W, \mathbf{X}, t) = \frac{W^{1-\gamma}}{1-\gamma}
  f(\mathbf{X}, t),
\end{equation}
and the first-order conditions for the optimal portfolio $\w^*$ yield
\begin{equation}\label{eq:gen_wstar}
  \w^* = \frac{1}{\gamma} \V(\mathbf{X})^{-1}
  \bigl(\muvec(\mathbf{X}) - \rf(\mathbf{X})\one\bigr)
  + \frac{1}{\gamma} \V(\mathbf{X})^{-1}
  \boldsymbol{\Phi}(\mathbf{X}) \,
  \frac{\nabla_{\mathbf{X}} f}{f},
\end{equation}
where $\V(\mathbf{X}) = \Sig(\mathbf{X})\Sig(\mathbf{X})^\top$ and
$\boldsymbol{\Phi}(\mathbf{X})$ is a matrix of covariances between asset
returns and state variable innovations.  Neither term depends on $W$.

\begin{proposition}[Portfolio neutrality under general Markov diffusions]
\label{prop:markov_neutrality}
Let asset returns and state variables follow the Markov
diffusion~\eqref{eq:gen_price}--\eqref{eq:gen_state}.  Under CRRA
preferences and a proportional wealth tax on all assets, the optimal
portfolio weights~\eqref{eq:gen_wstar}---including all intertemporal
hedging demands---are independent of the wealth tax rate $\tw$.

This encompasses, as special cases: geometric Brownian motion (constant
$\mu$, $\sigma$), the Heston stochastic volatility model, the Hull--White
model, the SABR model, affine term structure models with stochastic
interest rates, and any other model in which returns follow a Markov
diffusion with state-dependent coefficients.
\end{proposition}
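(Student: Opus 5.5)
The plan is to reproduce the argument of Proposition~\ref{prop:sv_neutrality} in the multivariate setting. The work splits into three parts: (a) the separable ansatz~\eqref{eq:gen_J} is consistent; (b) the portfolio first-order condition gives~\eqref{eq:gen_wstar}, in which $\tw$ appears only through $\nabla_{\mathbf{X}} f/f$; (c) the PDE for $f$ involves $\tw$ in a way that leaves $\nabla_{\mathbf{X}} f/f$ unchanged.

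\textbf{Steps (a) and (b).} I will write the HJB equation for $J(W,\mathbf{X},t)$ under~\eqref{eq:gen_price}--\eqref{eq:gen_dW}. Its terms are:
\begin{itemize}
\item the $W$-drift $J_W\{W[\rf+\w^\top(\muvec-\rf\one)-\tw]-C\}$;
\item the state drift $\nabla_{\mathbf{X}}J^\top \mathbf{a}$;
\item the quadratic terms $\tfrac12 J_{WW}W^2\w^\top\V\w$ and $\tfrac12\mathrm{tr}(\mathbf{b}\mathbf{b}^\top\nabla^2_{\mathbf{X}}J)$;
\item the cross term $W\,\w^\top\boldsymbol{\Phi}\,\nabla_{\mathbf{X}}J_W$.
\end{itemize}
Substituting~\eqref{eq:gen_J}, every term carries the common factor $W^{1-\gamma}$, because the controls enter only through $\w$ and $C/W$. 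Hence the ansatz closes. Differentiating in $\w$ and dividing by $\gamma W^{1-\gamma}f$ gives~\eqref{eq:gen_wstar}. The myopic term visibly does not involve $\tw$ or $W$. So it remains to control $\nabla_{\mathbf{X}}f/f$.

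\textbf{Step (c): exact factorisation without intermediate consumption.} I will substitute the optimal $\w^*$ and $C^*=Wf^{-1/\gamma}$ back into the HJB. This gives the analogue of~\eqref{eq:f_pde}:
\[
0=f_t+\mathcal{L}_{\mathbf{X}}f+\frac{1-\gamma}{2\gamma}\,\mathcal{Q}(f)+\bigl[(1-\gamma)(\rf(\mathbf{X})-\tw)-\delta\bigr]f+\gamma f^{1-1/\gamma},
\]
where $\mathcal{L}_{\mathbf{X}}$ is the generator of $\mathbf{X}$ and $\mathcal{Q}(f)$ is the quadratic form in $(\muvec-\rf\one, \nabla_{\mathbf{X}}f/f)$ weighted by $\V^{-1}$ and $\boldsymbol{\Phi}$. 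The key observation is that $\tw$ enters only through the zero-order term $-(1-\gamma)\tw f$, with a constant coefficient. Apart from the consumption term, the operator is homogeneous of degree one in $f$. Consequently, if $f_0$ solves the no-tax problem, then $f_{\tw}=e^{-(1-\gamma)\tw(T-t)}f_0$ solves the taxed problem, and $\nabla_{\mathbf{X}}f_{\tw}/f_{\tw}=\nabla_{\mathbf{X}}f_0/f_0$. This settles the terminal-wealth case completely and reproduces the Riccati/affine argument of Proposition~\ref{prop:sv_neutrality} as a special case.

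\textbf{Main obstacle: the consumption term.} The term $\gamma f^{1-1/\gamma}$ is not homogeneous of degree one in $f$ when $\gamma\ne1$, so the exponential rescaling does not pass through it directly. My first attempt will be the equivalent pathwise formulation. Set $\tilde W_t=e^{\tw t}W_t$ and $\tilde C_t=e^{\tw t}C_t$. Then $\tilde W$ follows the untaxed dynamics with the same $\w$. Also $U(C_t)=e^{-(1-\gamma)\tw t}U(\tilde C_t)$, so the taxed problem is the untaxed problem with a deterministic time-dependent rescaling of felicity. Under this formulation I would argue that the feasible set of portfolio-weight processes is identical across $\tw$ and that returns per unit of wealth are unchanged.

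I will then check whether the induced weighting changes the $\mathbf{X}$-shape of $f$ or only its time profile. This is exactly where the argument is delicate. If it only rescales the time profile, the hedging term is invariant and the proposition follows as stated. If not, I expect that the consumption term must be handled by showing that the $\mathbf{X}$-gradient of $\log f$ solves a $\tw$-free equation. I regard this step as the crux of the proof, and it is the step I am least sure can be completed as stated. If it cannot, unconditional neutrality in the consumption case may not follow from this route.
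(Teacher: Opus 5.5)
Your steps (a)--(b) match the paper's argument. Your factorisation $f_{\tw}=e^{-(1-\gamma)\tw(T-t)}f_0$ is a sharper version of the paper's reasoning for the case without intermediate consumption. The gap is exactly the step you flag, and it cannot be closed. Your own pathwise transformation already settles it once you note that $e^{-\delta t}e^{-(1-\gamma)\tw t}=e^{-(\delta+(1-\gamma)\tw)t}$: the ``time-dependent rescaling of felicity'' is just a change of the \emph{constant} discount rate. From any date $t$, the taxed problem is therefore identical to the untaxed problem with $\delta'=\delta+(1-\gamma)\tw$, with the same $\w$ and the same $C/W$. Hence $f^{\delta}_{\tw}=f^{\delta'}_{0}$, and $\w^*$ is $\tw$-invariant if and only if it is $\delta$-invariant. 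With intermediate consumption, $\gamma\neq1$ and a stochastic opportunity set, it generally is not. In complete markets the optimal portfolio is a value-weighted average of the portfolios financing each consumption date $s$, with weights proportional to $e^{-\delta(s-t)/\gamma}E_t[(M_s/M_t)^{1-1/\gamma}]$, where $M$ is the pricing kernel. The date-$s$ hedging demand vanishes as $s\downarrow t$ and is generically nonzero for $s>t$. Changing $\delta$ therefore tilts the weights across horizons and changes the aggregate hedging demand; for instance, it tends to zero as $\delta'\to\infty$. In PDE terms, the consumption term $\gamma f^{1-1/\gamma}$ destroys degree-one homogeneity, so a constant shift in the zero-order coefficient does change the $\mathbf{X}$-profile of $f$. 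There is no $\tw$-free equation for $\nabla_{\mathbf{X}}f/f$ to be found.

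The paper's proof does not supply this step either. It infers that $\nabla_{\mathbf{X}}f/f$ is $\tw$-free because $\tw$ enters the PDE for $f$ only through an $\mathbf{X}$-independent coefficient. That inference is valid only when the PDE is homogeneous of degree one in $f$, i.e.\ in the terminal-wealth case you already proved. The Heston argument it invokes has the same defect: $f=\exp(A+Bv)$ does not solve \eqref{eq:f_pde} once $g(f)=\gamma f^{1-1/\gamma}$ is present. After dividing by $f$, the term $\gamma e^{-(A+Bv)/\gamma}$ is not affine in $v$, so no $\tw$-free Riccati equation for $B$ separates out. Your hesitation is therefore justified. What can be proved is:
\begin{itemize}
\item exact neutrality for terminal-wealth objectives (your factorisation);
\item neutrality of the myopic term in general;
\item full neutrality with intermediate consumption only when the hedging demand vanishes, e.g.\ a deterministic opportunity set or $\boldsymbol{\Phi}=0$.
\end{itemize}
As stated, with consumption in~\eqref{eq:gen_dW}, the proposition does not hold in general. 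Economically, for $\gamma>1$ the tax makes the investor behave as if more patient, which shifts the hedging demand.
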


\begin{proof}
By the same argument as \Cref{prop:sv_neutrality}.  Under the separable
value function~\eqref{eq:gen_J}, the first-order conditions for
$\w^*$ involve only the partials $J_W$, $J_{WW}$, and $J_{W X_m}$.
The ratios $J_W / (W J_{WW}) = -1/\gamma$ and
$J_{W X_m} / (W J_{WW}) = -f_{X_m}/(\gamma f)$ are both independent
of~$W$.  Since $\tw$ enters the wealth dynamics only through the drift
of~$W$, and the PDE for $f(\mathbf{X},t)$ inherits $\tw$ only in terms
that are independent of $\mathbf{X}$, the gradient ratio
$\nabla_{\mathbf{X}} f / f$ is independent of $\tw$.
\end{proof}

\subsection{When does neutrality break?}\label{sec:sv_breaks}

The stochastic volatility extension identifies CRRA as the operative
condition for neutrality beyond the location-scale family.  Neutrality
can fail when:

\begin{enumerate}
  \item \textbf{Non-CRRA preferences.}  If relative risk aversion depends
  on the wealth level---as with CARA utility, habit formation, or
  reference-dependent preferences---the portfolio weight becomes a
  function of $W$.  The wealth tax, by reducing $W$, then shifts the
  investor's risk tolerance and alters the optimal portfolio.  Under
  stochastic volatility, this effect is amplified: the hedging demand
  depends on the curvature of the value function, which is no longer
  homogeneous.

  \item \textbf{Wealth-dependent state dynamics.}  If the state variable
  dynamics depend on the investor's wealth (e.g., through market impact or
  endogenous volatility), the tax could alter the state evolution itself.
  This is empirically unlikely for a single investor but could matter in
  general equilibrium with a representative agent.

  \item \textbf{Non-universal taxation.}  If the tax applies to some
  assets but not others, the myopic demand is distorted (as in
  \citealt{Froeseth2026}, Section~9).  Under stochastic volatility, the
  hedging demand may also be distorted if the tax-exempt asset serves as a
  volatility hedge.
\end{enumerate}

\subsection{Time-scale dependence of stylised facts}\label{sec:timescale}

The stylised facts that motivate the stochastic volatility
extension---fat tails, volatility clustering, leverage effects---are not
equally pronounced at all horizons.  \Cref{app:timescale} reviews the
evidence in detail.  The key conclusion is that the most dramatic
departures from GBM are high-frequency phenomena that attenuate at
policy-relevant (monthly to annual) horizons.  What \emph{does} persist
is time-varying expected returns, regime-like volatility dynamics, and
the variance risk premium---precisely the features captured by the
general Markov diffusion framework of \Cref{sec:general_markov}.  The
stochastic volatility extension is therefore valuable because it
demonstrates that neutrality extends to these empirically relevant
features.

\section{Beyond CRRA Preferences}\label{sec:beyond_crra}

The stochastic volatility extension identifies CRRA preferences as the
operative condition for portfolio neutrality beyond the location-scale
family.  Under CRRA, the value function is homogeneous of degree
$1-\gamma$ in wealth, which makes all portfolio demands independent of
the wealth level and hence of the tax.  This section asks: what happens
when preferences depart from CRRA?

The question is empirically motivated.  Households with substantial
financial wealth often display risk-taking behaviour that is difficult to
reconcile with constant relative risk aversion.
\citet{Carroll2002} documents that the ultra-wealthy maintain saving
rates far exceeding what lifecycle consumption smoothing would predict,
suggesting that wealth itself---not just the consumption it
finances---enters the objective.  \citet{WachterYogo2010} show that
household portfolio shares in risky assets \emph{rise} with wealth, a
pattern consistent with decreasing relative risk aversion (DRRA) but
inconsistent with CRRA.  On the other hand, \citet{BrunnermeierNagel2008}
find that risky portfolio shares respond only weakly to wealth
fluctuations, a result more consistent with CRRA or with substantial
portfolio inertia.

We take the HARA (Hyperbolic Absolute Risk Aversion) utility class as
the tractable leading example.  HARA nests CRRA as a special case and
generates wealth-dependent portfolio demands through a subsistence
consumption parameter.  We derive the optimal portfolio under a wealth
tax and show that neutrality fails: the tax distorts portfolio
composition by raising the present value of subsistence needs.

\subsection{HARA utility}\label{sec:hara_setup}

The HARA class, analysed by \citet{Merton1971}, is defined by the
property that the Arrow--Pratt risk tolerance is linear in consumption:
\begin{equation}\label{eq:risk_tolerance}
  T(c) \equiv -\frac{u'(c)}{u''(c)} = \frac{c - \zeta}{\gamma},
\end{equation}
where $\zeta \geq 0$ is a subsistence (or habit) level and $\gamma > 0$
is the curvature parameter.  The corresponding utility function is
\begin{equation}\label{eq:hara_u}
  u(c) = \frac{(c - \zeta)^{1-\gamma}}{1 - \gamma}, \qquad
  c > \zeta, \quad \gamma \neq 1.
\end{equation}
When $\zeta = 0$, this reduces to CRRA.  When $\zeta > 0$, the investor
requires a minimum consumption flow $\zeta$ per unit time; only
consumption in excess of $\zeta$ generates the usual isoelastic utility.

The relative risk aversion of $u$ at consumption level $c$ is
$\gamma c/(c - \zeta)$, which exceeds $\gamma$ for all $c > \zeta$ and
converges to $\gamma$ as $c \to \infty$.  Thus HARA with $\zeta > 0$
generates \emph{decreasing relative risk aversion}: wealthier investors,
who consume further above subsistence, are effectively less risk-averse.

\subsection{Optimal portfolio with a wealth tax}\label{sec:hara_portfolio}

Consider the standard Merton problem under GBM with a single risky asset
(drift $\mu$, volatility $\sigma$), a risk-free rate $\rf$, and a
proportional wealth tax $\tw$.  The investor maximises expected
discounted lifetime utility subject to the wealth
dynamics~\eqref{eq:heston_dW} (with constant $\sigma$).

Under HARA utility~\eqref{eq:hara_u}, the value function takes the form
\begin{equation}\label{eq:hara_J}
  J(W, t) = \frac{(W - H)^{1-\gamma}}{1-\gamma} \, g(t),
\end{equation}
where $H$ is the \emph{floor wealth}---the present value of the
subsistence consumption stream $\zeta$ funded entirely by the risk-free
asset at the after-tax return $\rf - \tw$:
\begin{equation}\label{eq:floor_wealth}
  H(\tw) = \frac{\zeta}{\rf - \tw}, \qquad \rf > \tw.
\end{equation}
The condition $\rf > \tw$ is required for $H$ to be finite: if the
wealth tax exceeds the risk-free rate, the investor cannot fund
subsistence indefinitely from riskless savings alone.

The investor effectively decomposes wealth into two components: a
riskless annuity worth $H$ that funds subsistence, and a \emph{surplus}
$S = W - H$ that is invested optimally.  The surplus behaves as the
wealth of a CRRA investor.

The first-order condition for the portfolio weight in~\eqref{eq:hara_J}
gives the partial derivatives
\begin{equation}
  J_W = (W - H)^{-\gamma} g, \qquad
  J_{WW} = -\gamma (W - H)^{-\gamma - 1} g.
\end{equation}
The FOC $J_W W(\mu - \rf) + J_{WW} w \sigma^2 W^2 = 0$ becomes
\begin{equation}
  (W - H)(\mu - \rf) - \gamma w \sigma^2 W = 0,
\end{equation}
yielding the optimal portfolio weight (fraction of total wealth in the
risky asset):
\begin{equation}\label{eq:wstar_hara}
  \boxed{w^* = \frac{\mu - \rf}{\gamma \sigma^2} \cdot
  \frac{W - H(\tw)}{W}
  = w^*_{\mathrm{CRRA}} \cdot
  \left(1 - \frac{H(\tw)}{W}\right)}
\end{equation}

\subsection{Portfolio distortion}\label{sec:hara_distortion}

\begin{proposition}[Portfolio distortion under HARA preferences]
\label{prop:hara_distortion}
Under HARA utility~\eqref{eq:hara_u} with subsistence level $\zeta > 0$
and a proportional wealth tax $\tw$ on all assets, the optimal portfolio
weight~\eqref{eq:wstar_hara} is strictly decreasing in~$\tw$ for all
$W > H(\tw)$.  The tax distortion relative to the zero-tax case is
\begin{equation}\label{eq:tax_distortion}
  \Delta w^* \equiv w^*(\tw) - w^*(0)
  = -\frac{\mu - \rf}{\gamma \sigma^2} \cdot
  \frac{\zeta \, \tw}{W \, \rf (\rf - \tw)} < 0.
\end{equation}
The magnitude of the distortion is larger for investors with
(i)~higher subsistence needs $\zeta$, (ii)~lower wealth $W$,
(iii)~lower risk-free rates $\rf$, and (iv)~higher existing tax rates
$\tw$.  The distortion vanishes as $W \to \infty$ or $\zeta \to 0$.
\end{proposition}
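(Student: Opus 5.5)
The proof takes the closed-form weight~\eqref{eq:wstar_hara} as given. The only channel through which $\tw$ enters is the floor wealth $H(\tw)=\zeta/(\rf-\tw)$; the first factor $(\mu-\rf)/(\gamma\sigma^2)$ is tax-invariant, exactly as in the myopic term of Proposition~\ref{prop:sv_neutrality}. The argument therefore reduces to properties of the scalar map $\tw\mapsto H(\tw)$ on $[0,\rf)$.

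\textbf{Monotonicity.} Differentiate: $H'(\tw)=\zeta/(\rf-\tw)^2>0$ when $\zeta>0$. Hence
\[
\frac{\partial w^*}{\partial \tw}=-\frac{\mu-\rf}{\gamma\sigma^2}\cdot\frac{H'(\tw)}{W}<0 .
\]
This assumes the maintained sign convention $\mu>\rf$, i.e.\ a positive risk premium, which is also what makes the stated inequality $\Delta w^*<0$ meaningful. The restriction $W>H(\tw)$ keeps the surplus, and hence the value function~\eqref{eq:hara_J}, well defined; it plays no role in the sign.

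\textbf{Closed form.} Subtract the two evaluations of~\eqref{eq:wstar_hara}. This gives $\Delta w^*=-\frac{\mu-\rf}{\gamma\sigma^2}\cdot\frac{H(\tw)-H(0)}{W}$. Combine fractions:
\[
\frac{\zeta}{\rf-\tw}-\frac{\zeta}{\rf}=\frac{\zeta\,\tw}{\rf(\rf-\tw)},
\]
which yields~\eqref{eq:tax_distortion}. Its sign is negative because every factor is positive for $0<\tw<\rf$.

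\textbf{Comparative statics.} These are read off the magnitude $|\Delta w^*|$, which is
\begin{itemize}
\item linear in $\zeta$;
\item proportional to $1/W$;
\item increasing in $\tw$, since $\tw/(\rf-\tw)$ is increasing;
\item decreasing in $\rf$, holding the premium $\mu-\rf$ fixed, since $\rf(\rf-\tw)$ is increasing in $\rf$.
\end{itemize}
The vanishing limits follow immediately: as $W\to\infty$ or $\zeta\to 0$, the factor $\zeta/W\to 0$. I would state explicitly that (iii) holds with the risk premium $\mu-\rf$ held constant. If $\mu$ is held fixed instead, the myopic factor also moves with $\rf$ and the claim needs this qualifier.

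\textbf{Main obstacle.} The only substantive step is justifying the floor $H(\tw)=\zeta/(\rf-\tw)$ that enters~\eqref{eq:wstar_hara}. I would verify it directly: a riskless position of size $H$ taxed at $\tw$ has drift $(\rf-\tw)H-\zeta=0$, so it exactly funds subsistence in perpetuity. The surplus $S=W-H$ then obeys the CRRA wealth dynamics with consumption $C-\zeta$. This justifies the ansatz in the infinite-horizon case.

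Two gaps need flagging. First, with a finite horizon $T$ the floor would instead be an annuity factor $\zeta(1-e^{-(\rf-\tw)(T-t)})/(\rf-\tw)$. This is still increasing in $\tw$, so the sign result survives, but the exact expression~\eqref{eq:tax_distortion} applies only as $T\to\infty$. Second, the FOC above~\eqref{eq:wstar_hara} appears to divide by $W^{1-\gamma}$-type factors loosely. I would rederive it as $J_W W(\mu-\rf)+J_{WW}w\sigma^2W^2=0$, which gives $w^*W=(W-H)(\mu-\rf)/(\gamma\sigma^2)$ and confirms~\eqref{eq:wstar_hara}.
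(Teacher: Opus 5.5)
Your proposal is correct and is essentially the paper's proof: you isolate the $\tw$-dependence in $H(\tw)$, differentiate to get $\partial w^*/\partial\tw=-w^*_{\mathrm{CRRA}}\,\zeta/\bigl(W(\rf-\tw)^2\bigr)<0$, and obtain \eqref{eq:tax_distortion} from $H(\tw)-H(0)=\zeta\tw/\bigl(\rf(\rf-\tw)\bigr)$; your floor-wealth verification and finite-horizon caveat go beyond what the paper checks. One minor correction: the qualifier on (iii) is unnecessary, because if $\mu$ rather than $\mu-\rf$ is held fixed, the factor $\mu-\rf$ also falls as $\rf$ rises, so $|\Delta w^*|$ is decreasing in $\rf$ under either convention (given $\mu>\rf$).
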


\begin{proof}
From~\eqref{eq:wstar_hara}, $w^* = w^*_{\mathrm{CRRA}} \cdot
(1 - H(\tw)/W)$.  Since $w^*_{\mathrm{CRRA}} = (\mu - \rf)/(\gamma
\sigma^2)$ is independent of $\tw$ and $W$, the tax dependence arises
entirely through $H(\tw) = \zeta/(\rf - \tw)$.  Differentiating:
\begin{equation}\label{eq:dwdtau}
  \frac{\partial w^*}{\partial \tw}
  = -\frac{w^*_{\mathrm{CRRA}}}{W} \cdot \frac{\partial H}{\partial \tw}
  = -\frac{w^*_{\mathrm{CRRA}}}{W} \cdot
  \frac{\zeta}{(\rf - \tw)^2} < 0.
\end{equation}
The distortion~\eqref{eq:tax_distortion} follows from
$\Delta w^* = -w^*_{\mathrm{CRRA}} \cdot (H(\tw) - H(0))/W$ with
$H(\tw) - H(0) = \zeta \tw / (\rf(\rf - \tw))$.
\end{proof}

Figure~\ref{fig:hara_distortion} illustrates the mechanism.  Under
CRRA, the optimal risky share is constant across wealth levels (dashed
line).  Under HARA, the risky share increases with wealth and approaches
the CRRA value from below.  The wealth tax amplifies the distortion by
raising the floor wealth $H$, compressing the surplus $W - H$ available
for risk-taking.  The shaded region shows the additional distortion
attributable to the tax.  Two features are immediate: the distortion is
largest for investors near the subsistence floor and vanishes for the
ultra-wealthy; and each curve is defined only for $W > H(\tw)$, so
higher tax rates shrink the domain of feasible portfolios.

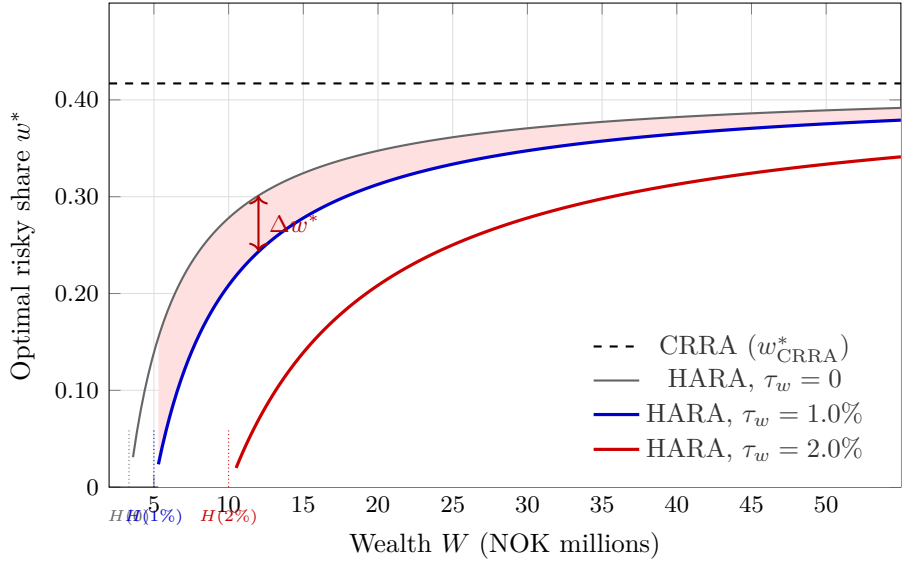
\begin{figure}[t]
\centering
\begin{tikzpicture}[scale=1.0]
  \begin{axis}[
    width=12cm, height=8cm,
    xlabel={Wealth $W$ (NOK millions)},
    ylabel={Optimal risky share $w^*$},
    xmin=2, xmax=55,
    ymin=0, ymax=0.50,
    xtick={5,10,15,20,25,30,35,40,45,50},
    ytick={0,0.10,0.20,0.30,0.40},
    yticklabels={0,0.10,0.20,0.30,0.40},
    grid=major,
    grid style={gray!25},
    legend pos=south east,
    legend style={font=\small, draw=none, fill=white, fill opacity=0.8},
    every axis label/.style={font=\small},
    tick label style={font=\footnotesize},
    clip=false,
  ]

    \addplot[fill=red!12, draw=none, domain=5.3:55, samples=200,
      forget plot]
      {0.417*(1 - 3.333/x)} \closedcycle;
    \addplot[fill=white, draw=none, domain=5.3:55, samples=200,
      forget plot]
      {0.417*(1 - 5/x)} \closedcycle;

    \addplot[black, dashed, thick, domain=2:55, samples=2]
      {0.417};
    \addlegendentry{CRRA ($w^*_{\mathrm{CRRA}}$)}

    \addplot[black!60, solid, thick, domain=3.6:55, samples=200]
      {0.417*(1 - 3.333/x)};
    \addlegendentry{HARA, $\tau_w = 0$}

    \addplot[blue!80!black, solid, very thick, domain=5.3:55, samples=200]
      {0.417*(1 - 5/x)};
    \addlegendentry{HARA, $\tau_w = 1.0\%$}

    \addplot[red!80!black, solid, very thick, domain=10.5:55, samples=200]
      {0.417*(1 - 10/x)};
    \addlegendentry{HARA, $\tau_w = 2.0\%$}

    \draw[<->, red!70!black, thick] (axis cs:12,{0.417*(1-3.333/12)})
      -- node[right, font=\footnotesize, text=red!70!black]
      {$\Delta w^*$} (axis cs:12,{0.417*(1-5/12)});

    \draw[black!60, densely dotted] (axis cs:3.333,0)
      -- (axis cs:3.333,0.06);
    \node[below, font=\tiny, text=black!60] at (axis cs:3.333,-0.012)
      {$H(0)$};
    \draw[blue!80!black, densely dotted] (axis cs:5,0)
      -- (axis cs:5,0.06);
    \node[below, font=\tiny, text=blue!80!black] at (axis cs:5,-0.012)
      {$H(1\%)$};
    \draw[red!80!black, densely dotted] (axis cs:10,0)
      -- (axis cs:10,0.06);
    \node[below, font=\tiny, text=red!80!black] at (axis cs:10,-0.012)
      {$H(2\%)$};

  \end{axis}
\end{tikzpicture}
\caption{Portfolio distortion under HARA preferences.  The dashed line
shows the CRRA benchmark (constant across wealth).  Solid curves show
the optimal risky share under HARA utility for three wealth tax rates.
The shaded region between the $\tau_w = 0$ and $\tau_w = 1\%$ curves
is the tax-induced distortion $\Delta w^*$.  Each curve begins at the
floor wealth $H(\tau_w) = \zeta / (r_f - \tau_w)$; higher taxes raise
$H$ and shrink the feasible domain.  Calibration:
$\mu - r_f = 5\%$, $\sigma = 20\%$, $\gamma = 3$,
$\zeta = 100{,}000$~NOK, $r_f = 3\%$.}
\label{fig:hara_distortion}
\end{figure}

\begin{remark}[Effective risk aversion]
\label{rem:fiscal_rra}
The result can be restated in terms of an effective relative risk
aversion.  Under the HARA value function~\eqref{eq:hara_J}, the
relative risk aversion with respect to total wealth is
\begin{equation}\label{eq:rra_eff}
  \mathrm{RRA}_{\mathrm{eff}}(W)
  = -\frac{W J_{WW}}{J_W}
  = \frac{\gamma W}{W - H(\tw)}
  = \frac{\gamma}{1 - H(\tw)/W}.
\end{equation}
This exceeds $\gamma$ whenever $H > 0$ and is increasing in~$\tw$.  The
wealth tax acts as a ``fiscal amplifier'' of risk aversion: by raising
the floor wealth $H$, it compresses the surplus $W - H$ and pushes the
investor toward more conservative portfolios.  The amplification is
strongest for investors whose wealth is close to the subsistence floor.
\end{remark}

\begin{remark}[Dollar amount vs.\ portfolio share]
While the portfolio \emph{share} $w^*$ depends on~$\tw$, the dollar
amount invested in the risky asset is
$w^* W = w^*_{\mathrm{CRRA}} \cdot (W - H)$---the CRRA-optimal
fraction of the surplus.  The tax reduces the risky-asset dollar holding
through two channels: it reduces $W$ (direct wealth drain) and increases
$H$ (higher subsistence cost), both of which shrink $W - H$.
\end{remark}

\subsection{Broader preference specifications}\label{sec:broader_prefs}

The HARA analysis illustrates a general principle: portfolio neutrality
requires preferences to be \emph{homothetic in wealth}---that is, the
value function must be homogeneous of some degree in $W$, so that
portfolio weights are scale-free.  Several empirically motivated
preference classes violate this condition.

\paragraph{Wealth in the utility function.}
\citet{BakshiChen1996} propose a ``spirit of capitalism'' model in which
utility depends on both consumption and wealth: $U = U(C, W)$.  If
wealth enters because it confers social status, power, or security
beyond its consumption value, then the marginal value of wealth is not
solely determined by the consumption it can finance.  A wealth tax in
this framework directly reduces the argument of utility, not merely the
budget constraint.  The optimal portfolio depends on the cross-partial
$U_{CW}$, which generically makes the portfolio weight a function of $W$
and hence of~$\tw$.  \citet{Carroll2002} provides empirical support:
the saving rates of the ultra-wealthy are far too high to be explained by
consumption smoothing alone, consistent with wealth entering utility
directly.

\paragraph{Epstein--Zin recursive utility.}
\citet{EpsteinZin1989} separate the coefficient of relative risk
aversion $\gamma$ from the elasticity of intertemporal substitution
$\psi$, which are constrained to satisfy $\psi = 1/\gamma$ under CRRA.

\begin{proposition}[Neutrality under Epstein--Zin]\label{prop:ez}
Under Epstein--Zin preferences with risk aversion $\gamma$ and EIS
$\psi \neq 1/\gamma$, the value function remains homogeneous of degree
$1 - \gamma$ in $W$.  The optimal portfolio weights are therefore
independent of $W$, and a proportional wealth tax is portfolio-neutral.
\end{proposition}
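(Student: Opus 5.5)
The plan is to work with the continuous-time Duffie--Epstein (stochastic differential utility) formulation of Epstein--Zin preferences. This is the natural analogue of the HJB setup used for \Cref{prop:sv_neutrality}. The Duffie--Epstein normalised aggregator is
\[
  f(C,J) = \frac{\delta}{1-1/\psi}\,(1-\gamma)J\left[\left(\frac{C}{((1-\gamma)J)^{1/(1-\gamma)}}\right)^{1-1/\psi} - 1\right],
\]
with the case $\psi=1$ handled by its logarithmic limit. The HJB equation then reads
\[
  0=\max_{C,\w}\bigl\{f(C,J)+J_t+\mathcal{L}^{C,\w}J\bigr\},
\]
where $\mathcal{L}^{C,\w}$ is the generator of $(W,\mathbf{X})$ under the taxed wealth dynamics~\eqref{eq:gen_dW}. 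We work in the general Markov setting~\eqref{eq:gen_price}--\eqref{eq:gen_state}, so that GBM and Heston are covered as special cases.

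The first step is to establish homogeneity. Fix a scaling $k>0$. The wealth dynamics~\eqref{eq:gen_dW} are linear in $(W,C)$, and $\tw$ enters only as $-\tw W\,dt$, so if $(C,\w)$ is feasible from $W$ then $(kC,\w)$ is feasible from $kW$ and generates the scaled wealth path $kW_s$. The aggregator satisfies $f(kC,k^{1-\gamma}J)=k^{1-\gamma}f(C,J)$. By uniqueness of the solution to the backward recursion defining utility, the utility of the scaled plan is $k^{1-\gamma}$ times the original. Taking suprema gives
\[
  J(kW,\mathbf{X},t)=k^{1-\gamma}J(W,\mathbf{X},t),
\]
and hence $J=\frac{W^{1-\gamma}}{1-\gamma}F(\mathbf{X},t)$, which is the same separable form as~\eqref{eq:gen_J}. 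Note that this argument never uses $\psi=1/\gamma$, which is the content of the first claim of the proposition.

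The second step is to derive the portfolio first-order condition. The aggregator $f$ does not depend on $\w$, so the portfolio FOC is identical to the one in \Cref{prop:markov_neutrality}: it involves only $J_W$, $J_{WW}$ and $J_{WX_m}$. Substituting the separable form yields~\eqref{eq:gen_wstar} with $f$ replaced by $F$. Neither the myopic term nor the hedging term depends on $W$, which proves the second claim.

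The third step, which I expect to be the main obstacle, is to show that $\nabla_{\mathbf{X}}F/F$ is free of $\tw$. Independence of $W$ alone does not suffice, because $\tw$ could in principle enter $F$. Substituting the separable form into the HJB, the term $J_W W(\rf-\tw)$ contributes $(1-\gamma)(\rf-\tw)F$. The consumption FOC gives $C^*/W$ proportional to $F^{-\psi/(1-\gamma)}$, times constants, so the Epstein--Zin term becomes a nonlinear function of $F$ alone, of the form $F^{1-\frac{\psi-1}{\dots}}$. This is the EZ analogue of $g(f)$ in~\eqref{eq:f_pde}. The point to check is whether $\tw$ genuinely drops out of the ratio $\nabla_{\mathbf{X}}F/F$:
\begin{itemize}
  \item In the GBM case, $F$ is independent of $\mathbf{X}$, so the hedging term vanishes trivially.
  \item Under $\psi=1$ the nonlinearity becomes $F\log F$-type. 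Writing $F=e^{\phi}$, the PDE for $\phi$ has $\tw$ entering only as an additive constant, so $\phi=\phi_0(\mathbf{X},t)+a(t;\tw)$ and $\nabla\phi$ is tax-free.
\end{itemize}
For general $\psi\neq1$ with stochastic state variables, the nonlinear consumption term couples the $\tw$-dependent constant into the $\mathbf{X}$-dependence, so exact separation may fail. I would therefore state the result for this case via the standard Campbell--Viceira log-linearisation or the exponential-affine approximation used in \Cref{prop:sv_neutrality}. Under that approximation, $\tw$ shifts only the constant term and the loading $B(t)$ is unchanged.

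If exact separation does fail, the claim remains exact for the myopic component, for all constant-opportunity-set environments, and for $\psi=1$. Independence from $W$, and hence from wealth reductions caused by the tax, holds in full generality by Step 1. This last point is the neutrality notion the proposition emphasises.
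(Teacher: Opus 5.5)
Your Steps~1--2 are the paper's entire proof, done more carefully. Homogeneity of the aggregator plus a budget linear in $(W,C)$ gives $J=\frac{W^{1-\gamma}}{1-\gamma}F(\mathbf{X},t)$, so the portfolio first-order condition is free of $W$. The paper stops there and takes ``independent of $W$'' to mean ``independent of $\tw$''. You are right that this inference is where the content lies, but Step~3 does not close it, and both of your fallbacks fail.

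(i)~The log-linearisation does not make the loading tax-invariant. The consumption rule is $C^*/W=\delta^{\psi}F^{(1-\psi)/(1-\gamma)}$, not $F^{-\psi/(1-\gamma)}$. Linearising it around $h=\exp(E[c-w])$ turns the equation for $\phi=\log F$ into a linear one with a $-h\phi$ term, so $h$ enters the Riccati equation for $B$. And $h$ moves with $\tw$ exactly when $\psi\neq1$; with constant opportunities, for instance, $C^*/W=\psi\delta+(1-\psi)\bigl(\rf-\tw+\tfrac{(\mu-\rf)^2}{2\gamma\sigma^2}\bigr)$. Freezing $h$ at its untaxed value is not an approximation of the taxed problem.

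(ii)~Re-reading neutrality as $W$-independence changes the claim. Here, as in \Cref{prop:sv_neutrality,prop:markov_neutrality}, neutrality means $\partial\w^*/\partial\tw=0$, hedging demands included.

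The gap cannot be closed at the stated generality, and a short substitution shows exactly where the statement holds. Set $\tilde W_t=e^{\tw t}W_t$, $\tilde C_t=e^{\tw t}C_t$ and $\tilde V_t=e^{(1-\gamma)\tw t}V_t$. The budget loses its tax term. By the same homogeneity you used in Step~1, the recursion for $\tilde V$ then has aggregator $f(C,V)-(1-\gamma)\tw V$. For the infinite-horizon problem, rescaling $V$ by a positive constant turns this into the Duffie--Epstein aggregator with discount rate $\delta'=\delta+(1-1/\psi)\tw$. Equivalently, in your PDE for $F$, a constant rescaling $F\mapsto\lambda F$ moves the $\tw$ term into $\delta$.

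So the taxed problem is the untaxed one with a different rate of time preference, and the tax is portfolio-neutral iff $\w^*$ does not depend on $\delta$. That holds for the myopic demand, hence whenever opportunities are constant. It also holds exactly at $\psi=1$, because then $\delta'=\delta$. Your two bullets are precisely the true cases.

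Hedging demands, however, generically depend on $\delta$ through the consumption--wealth ratio (Campbell--Viceira's $\rho$, Chacko--Viceira's $h$). So for $\psi\neq1$ with stochastic state variables, the proposition and the paper's proof of it overreach. The same substitution at $\psi=1/\gamma$ gives $\delta'=\delta+(1-\gamma)\tw$. The exponential-affine argument of \Cref{prop:sv_neutrality} is therefore not a safe template either once there is intermediate consumption.

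What you can actually prove is the following. The myopic demand is always tax-invariant. The full portfolio is tax-invariant when $\psi=1$ or opportunities are constant. Otherwise the tax shifts the hedging demand exactly as raising the discount rate by $(1-1/\psi)\tw$ would.
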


\begin{proof}
The Epstein--Zin aggregator over consumption and continuation value is
homogeneous of degree one in $(C, V)$.  Under homothetic
budget dynamics (as with a proportional wealth tax that scales all
returns uniformly), the value function inherits the homogeneity
of the CRRA kernel: $J(W, \mathbf{X}) = W^{1-\gamma}\,f(\mathbf{X})$.
The first-order condition for portfolio weights depends only on
$\mathbf{X}$, not on $W$.
\end{proof}

This provides a sharp boundary: departing from CRRA in the direction of
Epstein--Zin preserves neutrality (the separation of $\gamma$ and
$\psi$ matters for the consumption response and welfare cost, but not
for portfolio composition), while departing in the direction of HARA or
wealth-in-utility breaks it.

\paragraph{Non-homothetic consumption.}
\citet{WachterYogo2010} model households as consuming both necessities
and luxuries, with income elasticities that differ across goods.  As
wealth rises, a smaller fraction is devoted to necessities and the
marginal dollar is allocated to luxury consumption, which the household
is more willing to put at risk.  This generates DRRA and
wealth-dependent portfolio shares through a mechanism related to---but
distinct from---the HARA subsistence parameter.  A wealth tax in the
Wachter--Yogo framework would shift the consumption mix toward
necessities and reduce the risky portfolio share, qualitatively
consistent with the HARA result.

\subsection{Empirical evidence on wealth-dependent risk-taking}
\label{sec:drra_evidence}

The magnitude of the portfolio distortion under non-CRRA preferences
depends on how strongly risk-taking responds to wealth in practice.  The
empirical evidence is mixed but informative.

\citet{BrunnermeierNagel2008} use panel data on US household portfolios
(PSID) and find that the risky asset share responds only weakly to
wealth fluctuations driven by income shocks.  Their estimates are
consistent with CRRA or with DRRA combined with substantial portfolio
inertia.  If CRRA is approximately correct, the distortions derived above
are small.

\citet{CalvetSodini2014}, using comprehensive Swedish registry data,
estimate a financial wealth elasticity of the risky share of
approximately $0.2$: a 10\% increase in financial wealth raises the
risky share by about 2 percentage points.  This suggests moderate
DRRA.  \citet{WachterYogo2010} obtain a qualitatively similar
pattern---portfolio shares rising in wealth---when calibrating their
non-homothetic life-cycle model to US data, with an empirical
semi-elasticity of approximately 3.6 percentage points per log-unit
of net worth.

These estimates can be used to gauge the HARA distortion.  The wealth
elasticity of $w^*$ in~\eqref{eq:wstar_hara} is
\begin{equation}\label{eq:elasticity}
  \varepsilon \equiv \frac{\partial \ln w^*}{\partial \ln W}
  = \frac{H(\tw)}{W - H(\tw)},
\end{equation}
which is positive (confirming DRRA) and decreasing in $W$.  Matching
$\varepsilon \approx 0.2$ at a representative wealth level gives an
estimate of $H/W$ and hence the subsistence-to-wealth ratio that
parameterises the distortion.  At $\varepsilon = 0.2$, the implied ratio
is $H/W \approx 0.17$, meaning roughly one-sixth of wealth is committed
to subsistence.  The portfolio distortion from a 1\% wealth tax at this
calibration would be modest but nonzero---reducing $w^*$ by
approximately $0.2 \times \tw / \rf$ in relative terms.

\paragraph{Implications for policy.}
The key insight is that the portfolio distortion under non-CRRA
preferences is \emph{regressive in wealth}: it is largest for investors
near the subsistence floor and vanishes for the ultra-wealthy.  This
creates a tension with the distributional objectives of wealth taxation.
The investors most affected by the portfolio distortion are not the
ultra-rich (who are well approximated by CRRA) but those in the lower
tail of the wealth distribution subject to the tax---precisely the group
for whom the tax is least intended.  The ultra-wealthy, whose behaviour
motivates much of the policy debate \citep{Carroll2002}, are
paradoxically the group for whom the CRRA neutrality result is most
accurate.

\section{Non-Uniform Taxation}\label{sec:nonuniform}

Sections~\ref{sec:stochvol} and~\ref{sec:beyond_crra} examined
conditions under which neutrality holds or fails as a property of the
return dynamics and the investor's preferences.  We now turn to a
different source of non-neutrality: the institutional design of the
tax itself.  This and the following sections analyse features of
real-world wealth taxes---non-uniform assessment, market inelasticity,
progressive thresholds, and endogenous effort---that break neutrality
even when the investor has CRRA preferences and returns follow a
location-scale distribution.

\medskip
The neutrality results in \citet{Froeseth2026} and in
\Cref{sec:stochvol} above require the wealth tax to apply
\emph{uniformly} to all assets: every dollar of wealth is taxed at the
same rate regardless of the asset class in which it is held.  In
practice, no country satisfies this condition.  Norway and Switzerland,
the two OECD countries with active net wealth taxes, both apply
asset-class-specific valuation discounts (Norwegian:
\emph{verdsettingsrabatt}) that create differential effective tax rates
across asset classes \citep{OECD2018,Skatteetaten2026}.

This section derives the portfolio distortion that arises when the
wealth tax is non-uniform.  The result is a closed-form expression for
the tilt in optimal portfolio weights as a function of the assessment
differentials, and it connects directly to the empirical findings of
\citet{Ring2024} and \citet{FagerengGuisoRing2024}.

\subsection{Institutional setting: the Norwegian wealth tax}
\label{sec:norway_system}

Norway levies an annual net wealth tax (\emph{formuesskatt}) at a
combined municipal and state rate of 1.0\% on net wealth exceeding NOK
1.9 million (2026), rising to 1.1\% above NOK 21.5 million.  The
defining feature of the system is that different asset classes are
assessed at different fractions of their market value.  As of 2026, the
principal assessment fractions are \citep{Skatteetaten2026}:

\begin{table}[h]
\centering
\caption{Norwegian wealth tax assessment fractions (2026)}
\label{tab:norway_alpha}
\begin{tabular}{lcc}
\toprule
Asset class & Assessment fraction $\alpha_i$ & Effective discount \\
\midrule
Bank deposits          & 1.00 & 0\% \\
Secondary housing      & 1.00 & 0\% \\
Listed shares          & 0.80 & 20\% \\
Unlisted shares        & 0.80 & 20\% \\
Commercial property    & 0.80 & 20\% \\
Holiday homes          & 0.30 & 70\% \\
Primary housing ($\leq$ NOK 10m)  & 0.25 & 75\% \\
Primary housing ($>$ NOK 10m)     & 0.70 & 30\% \\
\bottomrule
\end{tabular}
\end{table}

\noindent The assessment fractions have changed substantially over
time.  Before 2023, the discount on shares and commercial property was
45\%; it was tightened to 20\% for 2023 onwards.  The housing assessment model was
overhauled in 2010, replacing historical production-cost estimates with
hedonic regression models based on market transactions
\citep{Ring2024}.  The 75\% discount on primary housing was introduced
simultaneously to offset the resulting increase in assessed values, a
political compromise that embedded a large pro-homeownership tilt into
the tax base.

\subsection{Portfolio problem with asset-class-specific assessment}
\label{sec:nonuniform_model}

Consider $K$ risky assets and one risk-free asset (e.g., bank deposits).
The risk-free asset has assessment fraction $\alpha_0$ and the risky
assets have assessment fractions $\boldsymbol{\alpha} = (\alpha_1,
\ldots, \alpha_K)^\top$.  The statutory wealth tax rate is $\tw$.

The after-tax return on the risk-free asset is $\rf - \tw \alpha_0$.
The after-tax excess return on risky asset $i$ over the risk-free asset
is
\begin{equation}\label{eq:after_tax_excess}
  (\mu_i - \tw \alpha_i) - (\rf - \tw \alpha_0)
  = (\mu_i - \rf) - \tw(\alpha_i - \alpha_0).
\end{equation}
The term $\tw(\alpha_i - \alpha_0)$ is the \emph{tax wedge}: it reduces
(increases) the effective excess return of asset $i$ when $\alpha_i >
\alpha_0$ ($\alpha_i < \alpha_0$).

The investor's wealth evolves as
\begin{equation}\label{eq:nonuniform_dW}
  dW = \bigl\{W[\rf - \tw\alpha_0
  + \w^\top(\muvec - \rf\one - \tw(\boldsymbol{\alpha} - \alpha_0\one))]
  - C\bigr\} \, dt + W \, \w^\top \Sig \, d\mathbf{W}_t.
\end{equation}

Under CRRA utility and GBM (constant $\muvec$, $\Sig$), the first-order
conditions for the optimal portfolio yield
\begin{equation}\label{eq:wstar_nonuniform}
  \boxed{\w^* = \frac{1}{\gamma} \V^{-1}
  \bigl(\muvec - \rf\one - \tw(\boldsymbol{\alpha}
  - \alpha_0\one)\bigr)}
\end{equation}
where $\V = \Sig\Sig^\top$ is the return covariance matrix.

\begin{proposition}[Portfolio distortion under non-uniform taxation]
\label{prop:nonuniform}
Under CRRA preferences and a proportional wealth tax with
asset-class-specific assessment fractions $\alpha_0, \alpha_1, \ldots,
\alpha_K$, the optimal portfolio weights~\eqref{eq:wstar_nonuniform}
depend on the tax rate $\tw$ unless all assessment fractions are equal.
The distortion relative to the uniform-tax case is
\begin{equation}\label{eq:delta_w_nonuniform}
  \Delta\w^* \equiv \w^*(\boldsymbol{\alpha}) - \w^*(\alpha_0 \one)
  = -\frac{\tw}{\gamma} \V^{-1}
  (\boldsymbol{\alpha} - \alpha_0\one).
\end{equation}
The investor overweights assets with $\alpha_i < \alpha_0$ (those
receiving a larger valuation discount than the risk-free asset) and
underweights assets with $\alpha_i > \alpha_0$.
\end{proposition}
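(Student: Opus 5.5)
I will take the boxed formula~\eqref{eq:wstar_nonuniform} as the starting point and derive it from the HJB equation, since the statement rests on it. Under GBM the opportunity set is constant, so the CRRA guess $J(W,t) = \frac{W^{1-\gamma}}{1-\gamma} f(t)$ from~\eqref{eq:J_separable} applies with no state variable. The wealth dynamics~\eqref{eq:nonuniform_dW} have drift
\[
W\bigl[\rf - \tw\alpha_0 + \w^\top\bigl(\muvec - \rf\one - \tw(\boldsymbol{\alpha} - \alpha_0\one)\bigr)\bigr] - C
\]
and diffusion $W\,\w^\top\Sig\,d\mathbf{W}_t$. The $\w$-dependent part of the HJB equation is therefore
\[
J_W W\,\w^\top\bigl(\muvec - \rf\one - \tw(\boldsymbol{\alpha}-\alpha_0\one)\bigr) + \tfrac12 J_{WW} W^2\,\w^\top\V\w .
\]
Because $J_{WW} < 0$, this is strictly concave in $\w$ whenever $\V$ is positive definite. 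The first-order condition is sufficient, and dividing it by $-W J_{WW} = \gamma W^{-\gamma} f$ gives~\eqref{eq:wstar_nonuniform}.

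Unlike in \Cref{prop:sv_neutrality}, there is no hedging term here, since $f$ depends only on $t$. The tax-dependent constant $\rf - \tw\alpha_0$ affects only $f$ and hence consumption, exactly as argued in \Cref{sec:heston_consumption}.

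Next I treat the uniform benchmark. Setting $\boldsymbol{\alpha} = \alpha_0\one$ kills the wedge, so $\w^*(\alpha_0\one) = \gamma^{-1}\V^{-1}(\muvec - \rf\one)$. This is independent of $\tw$ and recovers the neutrality of \Cref{prop:markov_neutrality} in the GBM special case. Subtracting the two expressions and using linearity of $\V^{-1}$ gives~\eqref{eq:delta_w_nonuniform} immediately.

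For the claim that $\w^*$ depends on $\tw$ unless all $\alpha_i$ are equal, I differentiate:
\[
\partial\w^*/\partial\tw = -\gamma^{-1}\V^{-1}(\boldsymbol{\alpha}-\alpha_0\one).
\]
Since $\V^{-1}$ is nonsingular, this vanishes if and only if $\boldsymbol{\alpha} = \alpha_0\one$, that is, if and only if all assessment fractions coincide. I will state $\V \succ 0$ (no redundant assets) as the standing assumption.

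The main obstacle is the final sentence, on overweighting and underweighting. Taken componentwise it is false for general $\V$, because $\V^{-1}$ has off-diagonal terms. A cheaper-assessed asset strongly correlated with a heavily taxed one can see its weight move the "wrong" way through the hedging cross-effect. I would handle this in three ways.

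First, I prove the statement exactly when $\V$ is diagonal (uncorrelated assets). Then $\Delta w_i^* = -\tw(\alpha_i-\alpha_0)/(\gamma\sigma_i^2)$, whose sign is opposite to that of $\alpha_i - \alpha_0$.

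Second, I give the general version in a direction-of-tilt form. Because $\V^{-1}$ is positive definite,
\[
(\boldsymbol{\alpha}-\alpha_0\one)^\top\Delta\w^* = -\tfrac{\tw}{\gamma}(\boldsymbol{\alpha}-\alpha_0\one)^\top\V^{-1}(\boldsymbol{\alpha}-\alpha_0\one) < 0 .
\]
So in aggregate the portfolio tilts away from higher-assessed assets and toward discounted ones.

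Third, in the single-risky-asset case ($K=1$) the sign is exact: $\Delta w^* = -\tw(\alpha_1-\alpha_0)/(\gamma\sigma^2)$. This is the case relevant to the Norwegian calibration of listed shares ($\alpha=0.8$) against deposits ($\alpha_0=1$), where the investor overweights shares.

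I will present the componentwise claim as holding under these conditions and make the caveat explicit.
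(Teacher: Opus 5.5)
Your derivation of the distortion formula is the paper's proof. The paper also uses the uniform case to get $\w^*(\alpha_0\one)=\gamma^{-1}\V^{-1}(\muvec-\rf\one)$, reads the extra term $-(\tw/\gamma)\V^{-1}(\boldsymbol{\alpha}-\alpha_0\one)$ off \eqref{eq:wstar_nonuniform}, and subtracts. The paper simply asserts \eqref{eq:wstar_nonuniform} as the first-order condition and leaves the ``unless all fractions are equal'' direction implicit. Your HJB derivation with the concavity check, and your appeal to the nonsingularity of $\V^{-1}$, fill in steps it skips.

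The real difference is the statement's final sentence, which the paper's proof does not address at all. You are right that it fails componentwise for non-diagonal $\V$. Your restricted versions ($K=1$ or diagonal $\V$) cover everything the paper actually uses, since Figure~\ref{fig:portfolio_tilt} computes each tilt against deposits in isolation. Your inequality $(\boldsymbol{\alpha}-\alpha_0\one)^\top\Delta\w^*<0$ is the correct general replacement. It says the tilt lowers the portfolio's effective assessment fraction $\alpha_0+\w^\top(\boldsymbol{\alpha}-\alpha_0\one)$, and hence the tax paid per unit of wealth.

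One correction to your illustration of the failure: a discounted asset that is \emph{positively} correlated with a more heavily assessed asset does not move the wrong way. For $K=2$,
\[
  \Delta w_1^* = -\frac{\tw}{\gamma(1-\rho^2)}
  \Bigl[\frac{\alpha_1-\alpha_0}{\sigma_1^2}
  -\frac{\rho(\alpha_2-\alpha_0)}{\sigma_1\sigma_2}\Bigr],
\]
so $\alpha_1<\alpha_0<\alpha_2$ with $\rho>0$ only reinforces the overweight of asset~1. A reversal needs either negative correlation with a heavily assessed asset, or positive correlation with a more deeply discounted one. For instance, $\alpha_0=1$, $\alpha_1=0.9$, $\alpha_2=0.25$, $\sigma_1=\sigma_2=\sigma$, $\rho=0.9$ gives $\Delta w_1^*=-(0.575/0.19)\,\tw/(\gamma\sigma^2)<0$, even though asset~1 is discounted. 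The investor moves into its more tax-favoured close substitute. State your caveat with an example of this kind.
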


\begin{proof}
Under uniform assessment $\alpha_i = \alpha_0$ for all $i$, the tax
enters~\eqref{eq:nonuniform_dW} as $-\tw\alpha_0 W$, a constant drain
on the drift that does not interact with $\w$.  The FOC
gives $\w^*_{\text{uniform}} = (1/\gamma)\V^{-1}(\muvec - \rf\one)$,
independent of $\tw$.  Under non-uniform assessment,
\eqref{eq:wstar_nonuniform} contains the additional term
$-(\tw/\gamma)\V^{-1}(\boldsymbol{\alpha} - \alpha_0\one)$.
Subtracting yields~\eqref{eq:delta_w_nonuniform}.
\end{proof}

\begin{remark}[Sharpe ratio distortion]
The non-uniform tax distorts after-tax Sharpe ratios.  For a single
risky asset with assessment fraction $\alpha_1$, the after-tax Sharpe
ratio is
\begin{equation}\label{eq:sr_nonuniform}
  \mathrm{SR}^{\text{after}} = \frac{\mu - \rf -
  \tw(\alpha_1 - \alpha_0)}{\sigma}.
\end{equation}
When $\alpha_1 < \alpha_0$ (as for Norwegian equities relative to bank
deposits), the after-tax Sharpe ratio \emph{exceeds} the pre-tax ratio.
The valuation discount artificially inflates the risk-adjusted return of
the tax-advantaged asset, distorting the price signal that the investor
faces.
\end{remark}

\subsection{Calibration to the Norwegian system}\label{sec:norway_calibration}

The Norwegian assessment fractions in \Cref{tab:norway_alpha} can be
used to quantify the portfolio distortion.  Consider a simplified
two-asset economy: listed equities ($\alpha_1 = 0.80$) and bank
deposits ($\alpha_0 = 1.00$).  At a statutory rate $\tw = 1.0\%$, the
tax wedge is
\begin{equation}
  \tw(\alpha_1 - \alpha_0) = 0.01 \times (0.80 - 1.00) = -0.002,
\end{equation}
i.e., the effective tax on equities is 20 basis points per year lower
than on deposits.  This is equivalent to a permanent annual subsidy of
0.2\% on the equity excess return.

For a CRRA investor with $\gamma = 4$ and equity volatility
$\sigma = 0.20$, the portfolio tilt from~\eqref{eq:delta_w_nonuniform}
is
\begin{equation}
  \Delta w^* = -\frac{0.01}{4} \cdot \frac{0.80 - 1.00}{0.04}
  = +\frac{0.002}{0.16} = +1.25\%.
\end{equation}
The investor increases the equity weight by 1.25 percentage points
relative to the neutral benchmark---a modest but non-trivial distortion
that compounds over the portfolio and across asset classes.

For primary housing ($\alpha = 0.25$), the tilt is far larger.  If
housing is modelled as an investable asset with volatility
$\sigma_h = 0.10$, the distortion becomes
\begin{equation}
  \Delta w^*_h = -\frac{0.01}{4} \cdot \frac{0.25 - 1.00}{0.01}
  = +18.75\%.
\end{equation}
This large number reflects both the 75\% discount and the low volatility
of housing.  While the simplified calculation ignores important features
of housing (illiquidity, leverage, consumption services), it illustrates
the order of magnitude: the Norwegian assessment system creates a
powerful incentive to hold primary housing relative to financial assets.

Figure~\ref{fig:portfolio_tilt} extends the two-asset calibration to the
main Norwegian asset classes, using the assessment fractions from
Table~\ref{tab:norway_alpha} and asset-class-specific volatility
estimates.  The resulting portfolio tilts span two orders of magnitude:
from zero for fully assessed assets (bank deposits, secondary housing)
to nearly 19 percentage points for primary housing below 10M NOK.  The
large spread reflects the combination of generous valuation discounts
and low asset-class volatility in the housing sector.

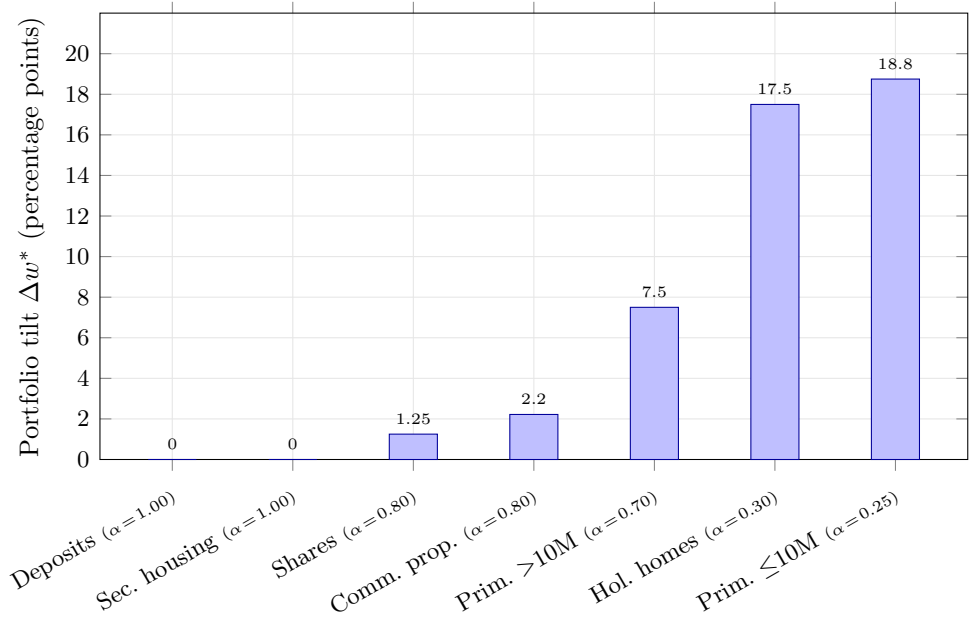
\begin{figure}[t]
\centering
\begin{tikzpicture}[scale=1.0]
  \begin{axis}[
    width=13cm, height=7.5cm,
    ybar,
    bar width=18pt,
    ylabel={Portfolio tilt $\Delta w^*$ (percentage points)},
    symbolic x coords={%
      {Deposits},
      {Sec.\ housing},
      {Shares},
      {Comm.\ prop.},
      {Prim.\ $>$10M},
      {Hol.\ homes},
      {Prim.\ $\le$10M}},
    xtick=data,
    xticklabels={%
      {Deposits {\tiny($\alpha\!=\!1.00$)}},
      {Sec.\ housing {\tiny($\alpha\!=\!1.00$)}},
      {Shares {\tiny($\alpha\!=\!0.80$)}},
      {Comm.\ prop.\ {\tiny($\alpha\!=\!0.80$)}},
      {Prim.\ $>$10M {\tiny($\alpha\!=\!0.70$)}},
      {Hol.\ homes {\tiny($\alpha\!=\!0.30$)}},
      {Prim.\ $\le$10M {\tiny($\alpha\!=\!0.25$)}}},
    x tick label style={font=\footnotesize, rotate=30, anchor=north east},
    ymin=0, ymax=22,
    ytick={0,2,4,6,8,10,12,14,16,18,20},
    grid=major,
    grid style={gray!20},
    every axis label/.style={font=\small},
    tick label style={font=\footnotesize},
    nodes near coords,
    nodes near coords style={font=\tiny, above},
    point meta=explicit symbolic,
    clip=false,
  ]
    \addplot[fill=blue!25, draw=blue!60!black] coordinates {
      ({Deposits},        0)    [0]
      ({Sec.\ housing},   0)    [0]
      ({Shares},          1.25) [1.25]
      ({Comm.\ prop.},    2.22) [2.2]
      ({Prim.\ $>$10M},   7.50) [7.5]
      ({Hol.\ homes},     17.50)[17.5]
      ({Prim.\ $\le$10M}, 18.75)[18.8]
    };

  \end{axis}
\end{tikzpicture}
\caption{Portfolio tilt $\Delta w^*$ toward each asset class relative to
bank deposits, under the Norwegian assessment system.  The tilt is
computed from~\eqref{eq:delta_w_nonuniform} with $\tau_w = 1.0\%$,
$\gamma = 4$, and asset-class volatilities $\sigma = 0.20$ (listed
shares), $\sigma = 0.15$ (commercial property), $\sigma = 0.10$
(housing classes).  Assessment fractions $\alpha_i$ are shown below each
bar.  The two-order-of-magnitude spread illustrates the strong incentive
to hold primary housing over financial assets.}
\label{fig:portfolio_tilt}
\end{figure}

\subsection{Leverage and debt deductibility}\label{sec:leverage}

The portfolio distortions derived above are amplified by the interaction
between valuation discounts and debt deductibility.  In the Norwegian
system, debt has historically been deductible at face value against the
wealth tax base, while assets receiving valuation discounts are assessed
below market value.  For a leveraged position in asset $i$ with
loan-to-value ratio $\ell_i = D_i/V_i$, the net contribution to taxable
wealth per unit of asset value is
\begin{equation}\label{eq:taxable_leverage}
  \frac{\text{Taxable wealth}}{V_i} = \alpha_i - \beta_i \, \ell_i,
\end{equation}
where $\beta_i \leq 1$ is the fraction of associated debt that is
deductible.

\paragraph{Leverage and effective tax rates.}
Under full debt deduction ($\beta_i = 1$), the net taxable contribution
$\alpha_i - \ell_i$ becomes negative whenever leverage exceeds the
assessment fraction.  This created a powerful sheltering strategy: highly
leveraged positions in discounted asset classes could offset the tax
base from other assets (see Appendix~\ref{app:leverage} for historical
examples and calibrations).  Since total net taxable wealth is floored
at zero, the mechanism operates by sheltering positive wealth in fully
assessed assets.

Norway's proportional debt reduction (\emph{gjeldsreduksjon}) sets
$\beta_i = \alpha_i$ for discounted assets, giving
\begin{equation}\label{eq:taxable_proportional}
  \frac{\text{Taxable wealth}}{V_i} = \alpha_i(1 - \ell_i),
\end{equation}
which is non-negative for all $\ell_i \leq 1$, eliminating the
negative tax base.  However, differential effective rates persist.
Under full debt deduction, the effective wealth tax rate on a leveraged
position with equity $E_i = V_i(1 - \ell_i)$ is
\begin{equation}\label{eq:effective_rate_leverage}
  \tau_{w,i}^{\text{eff}} = \tw \cdot \frac{\alpha_i - \ell_i}{1 - \ell_i},
\end{equation}
which can be negative---a qualitatively distinct distortion that creates
incentives to leverage into discounted asset classes.  The proportional
debt reduction rule attenuates this to
$\tau_{w,i}^{\text{eff}} = \tw \alpha_i$, which remains below $\tw$
for discounted assets.

\subsection{Empirical evidence}\label{sec:nonuniform_empirical}

The theoretical distortions derived above have empirical counterparts.
\Cref{app:empirical_nonuniform} reviews the evidence from Norway
\citep{Ring2024,FagerengGuisoRing2024} and Switzerland
\citep{Brulhart2022} in detail.  The key findings are: (i)~portfolio
composition is unaffected when the tax does not discriminate between
asset classes, consistent with our neutrality result under uniform
assessment; (ii)~when assessment differentials create distinct after-tax
returns, households do rebalance, but slowly---full reoptimisation takes
approximately five years; and (iii)~reported wealth responses to tax
rate changes are large but dominated by mobility and valuation responses
rather than real portfolio reallocation.  The sluggish adjustment
implies that the theoretical distortions derived in this paper
represent long-run equilibrium magnitudes; short-run portfolio
responses to a newly introduced or reformed wealth tax will be
smaller.

\subsection{Interaction with non-CRRA preferences}\label{sec:nonuniform_hara}

The distortions from non-uniform taxation and from non-CRRA preferences
identified in \Cref{sec:beyond_crra} can compound.  Under HARA utility
with subsistence level $\zeta > 0$ and asset-class-specific assessment,
the optimal weight of the single risky asset becomes
\begin{equation}\label{eq:wstar_hara_nonuniform}
  w^* = \frac{\mu - \rf - \tw(\alpha_1 - \alpha_0)}{\gamma\sigma^2}
  \cdot \frac{W - H(\tw, \alpha_0)}{W},
\end{equation}
where $H(\tw, \alpha_0) = \zeta/(\rf - \tw\alpha_0)$ is the floor
wealth evaluated at the after-tax risk-free rate.  The first factor
captures the Sharpe-ratio distortion from non-uniform assessment
(\Cref{prop:nonuniform}); the second captures the surplus-wealth effect
from non-CRRA preferences (\Cref{prop:hara_distortion}).  The two
distortions are multiplicative and can reinforce each other: an investor
near the subsistence floor holding an asset with a favourable assessment
discount experiences both an inflated effective Sharpe ratio and a
compressed risk-taking capacity.

\begin{remark}[Stochastic volatility and non-uniform assessment]
When return volatility varies across asset classes---equities being
substantially more volatile than real estate at monthly
frequencies---the hedging demand from the stochastic volatility
extension (Section~\ref{sec:stochvol}) interacts with non-uniform
assessment.  An investor who hedges against volatility shocks by tilting
toward low-volatility assets may find this tilt reinforced or opposed by
the tax-induced tilt from assessment differentials.  Formalising this
interaction requires a multi-asset stochastic volatility model with
asset-class-specific dynamics, which we leave for future work.
\end{remark}

\subsection{Policy implications}\label{sec:nonuniform_policy}

The analysis highlights a fundamental tension in wealth tax design.
Uniform assessment across all asset classes---the condition required for
portfolio neutrality---conflicts with several practical objectives.

First, valuation accuracy varies by asset class.  Bank deposits,
listed equities, and government bonds have observable market prices;
private business equity, real estate, and collectibles do not
\citep{OECD2018}.  Assessment discounts are often justified as
compensation for valuation uncertainty, but they introduce portfolio
distortions as a side effect.

Second, assessment discounts serve political objectives.  Norway's
75\% discount on primary housing reflects a policy choice to protect
homeownership, not a valuation concern---the hedonic regression model
introduced in 2010 provides accurate market values
\citep{Ring2024}.  The discount was introduced precisely to offset the
increase in assessed values that the improved model would generate.

Third, differential assessment creates an implicit industrial policy:
it channels capital toward tax-advantaged asset classes and away from
those assessed at full market value.  The Norwegian system, which taxes
bank deposits at full value while discounting housing by 75\%, provides
a large incentive to hold housing wealth---an allocation that may reduce
productive capital formation.

Fourth, as shown in \Cref{sec:leverage}, valuation discounts interact
with debt deductibility to create a leverage channel that amplifies the
portfolio distortion.  Under full debt deduction, leveraged positions in
discounted asset classes can generate negative taxable contributions,
allowing investors to shelter wealth held in other asset classes.  This
incentivises not only a tilt toward discounted assets but also excessive
leverage in those positions---a distortion with potential systemic
implications for financial stability.  Norway's proportional debt
reduction rules (\emph{gjeldsreduksjon}) eliminate the most egregious
arbitrage (negative tax bases), but the remaining differential in
effective rates continues to favour leveraged investment in discounted
asset classes over unleveraged holdings in fully assessed ones.

From the perspective of neutrality, the first-best policy is uniform
assessment at market value ($\alpha_i = 1$ for all $i$) combined with
full debt deductibility at face value.
\citet{ScheuerSlemrod2021} survey the design challenges and note that
uniform assessment is feasible for liquid assets but faces significant
obstacles for illiquid ones.  A second-best alternative is to equalise
assessment fractions across liquid financial assets (equities, bonds,
deposits) while accepting that illiquid assets require separate
treatment.  This would preserve portfolio neutrality within the
financial portfolio while acknowledging that housing and private business
equity present distinct valuation challenges.  Any remaining valuation
discounts should be paired with proportional debt reduction to prevent
the leverage arbitrage documented in \Cref{sec:leverage}.

These portfolio distortions are derived in partial equilibrium: they
take asset prices as given.  When all wealth-tax payers simultaneously
adjust their portfolios, the resulting aggregate flows feed back into
equilibrium prices.  Section~\ref{sec:inelastic} examines the magnitude
of this general equilibrium response.

\section{Inelastic Markets and General Equilibrium}\label{sec:inelastic}

The preceding sections analyse the wealth tax from the perspective of an
individual investor who takes asset returns as given.  In this partial
equilibrium setting, the CRRA neutrality result states that portfolio
\emph{weights} are unaffected by the tax.  But the tax reduces the
investor's total wealth, and hence the \emph{dollar amount} allocated to
each asset.  When all taxed investors simultaneously reduce their dollar
holdings, aggregate demand for risky assets falls.  This section asks:
how do equilibrium prices respond, and does the magnitude of the
response depend on the structure of market demand?

The answer, we argue, depends critically on the price elasticity of
aggregate equity demand.  Under the traditional assumption of elastic
markets---where many marginal investors stand ready to absorb demand
shifts at close to fundamental value---the price adjustment is modest and
efficient.  Under the \emph{inelastic markets hypothesis}
\citep{GabaixKoijen2021}, aggregate equity demand is far less elastic
than commonly assumed, and even modest flows can produce amplified price
effects.

\subsection{From partial to general equilibrium}\label{sec:pe_to_ge}

Under the CRRA neutrality result (\Cref{sec:stochvol},
\Cref{sec:general_markov}), the optimal portfolio weight $w^*$ is
independent of~$\tw$.  However, the dollar amount invested in equities
is $w^* W$, and $W$ is reduced by the tax.  In the long run, an
investor facing a perpetual tax $\tw$ accumulates less wealth by a
factor that depends on $\tw$ relative to the growth rate of wealth.

If a fraction $\phi$ of total equity market capitalisation is held by
taxed investors, the aggregate demand shift induced by the tax is
\begin{equation}\label{eq:aggregate_flow}
  \Delta F = -\tw \cdot \phi \cdot P_{\text{eq}},
\end{equation}
where $P_{\text{eq}}$ is the equilibrium market capitalisation and
$\Delta F$ represents the annual net flow out of equities required to
fund tax payments.  We treat $\Delta F$ as a flow (per period) rather
than a level shift, since the tax recurs annually.

In a frictionless general equilibrium with fully elastic demand, this
flow is absorbed efficiently: prices adjust by exactly the amount needed
to reflect the lower after-tax wealth, expected returns rise
commensurately, and the resulting equilibrium is Pareto-efficient
conditional on the tax.  The ``price impact'' is simply the new
fundamental value.

\subsection{The inelastic markets hypothesis}\label{sec:imh}

\citet{GabaixKoijen2021} challenge the elastic-demand assumption.
Using granular instrumental variables constructed from mutual fund flows,
they estimate the price elasticity of aggregate US equity demand and find
a striking result: \emph{investing \$1 in the stock market increases the
aggregate market's value by approximately \$5}.  The multiplier $M$,
defined as the ratio of the price change to the flow that caused it, is
estimated at approximately $M \approx 5$, with a range of 3--8 across
specifications.

The low elasticity arises because the marginal holders of equity---index
funds, pension funds, insurance companies---operate under mandates that
fix their equity allocations within narrow bands.  When aggregate demand
shifts, few participants can absorb the change, and prices must move
substantially to clear the market.  \citet{KoijenYogo2019} develop the
demand-system framework that underpins these estimates, showing that
institutional demand is far more price-inelastic than the efficient
markets paradigm assumes.

The micro-level foundations are provided by \citet{BouchaudFarmerLillo2009}
and \citet{Bouchaud2022}.  Large institutional orders are split into
many small child orders and executed over days to months, creating
persistent order flow that moves prices incrementally.  Revealed
liquidity---the depth of the visible order book near the current
price---is extremely low relative to typical order sizes.
\citet{Bouchaud2022} derives the aggregate multiplier from a latent
order book model: because latent liquidity is sparse near the prevailing
price, even moderate flows push through the available depth and generate
disproportionate price changes.  The empirical signature is the
well-documented square-root law of price impact, $I(Q) \propto
\sqrt{Q}$, which holds across stocks, futures, and options markets
\citep{BouchaudFarmerLillo2009}.

The fire sales literature provides further micro-level evidence.
\citet{CovalStafford2007} show that mutual funds experiencing large
outflows create measurable price pressure in their commonly held
securities, with prices subsequently reverting---consistent with
temporary demand-driven dislocations rather than information-driven
price changes.

\subsection{Price impact of wealth-tax-induced flows}
\label{sec:tax_price_impact}

Combining the neutrality framework with the inelastic markets
hypothesis yields a distinction between two effects of the wealth tax
on asset prices:

\paragraph{Fundamental valuation effect.}
The tax reduces the after-tax wealth of taxed investors, lowering their
lifetime consumption and hence the fundamental value of claims on their
future income.  In a representative-agent economy, this would reduce
equilibrium prices by a factor proportional to $\tw$ relative to the
discount rate.  This effect is present in any general equilibrium model
and does not depend on market elasticity.

\paragraph{Flow-amplification effect.}
Under inelastic markets, the aggregate flow $\Delta F$ required to fund
tax payments generates a price response that \emph{exceeds} the
fundamental valuation change.  If $\Delta F / P_{\text{eq}}$ is the flow
as a fraction of market capitalisation, the price impact is
\begin{equation}\label{eq:price_impact_imh}
  \frac{\Delta P}{P} \approx -M \cdot \frac{\Delta F}{P_{\text{eq}}}
  = -M \cdot \tw \cdot \phi,
\end{equation}
where $M$ is the Gabaix--Koijen multiplier and $\phi$ is the share of
market capitalisation held by taxed investors.  The excess price impact
$(M - 1) \cdot \tw \cdot \phi$ represents a flow-driven dislocation
that does not arise in frictionless markets.

\Cref{fig:price_wedge} illustrates the mechanism.  The supply of equity
is fixed (vertical).  A wealth tax shifts demand to the left by the
flow $\Delta F$.  Under elastic demand, the equilibrium price falls
modestly from $P_0$ to $P_1^E$; under inelastic demand, the same flow
produces a much larger decline to $P_1^I$.  The shaded region represents
the excess price impact---the amplification due to market inelasticity.

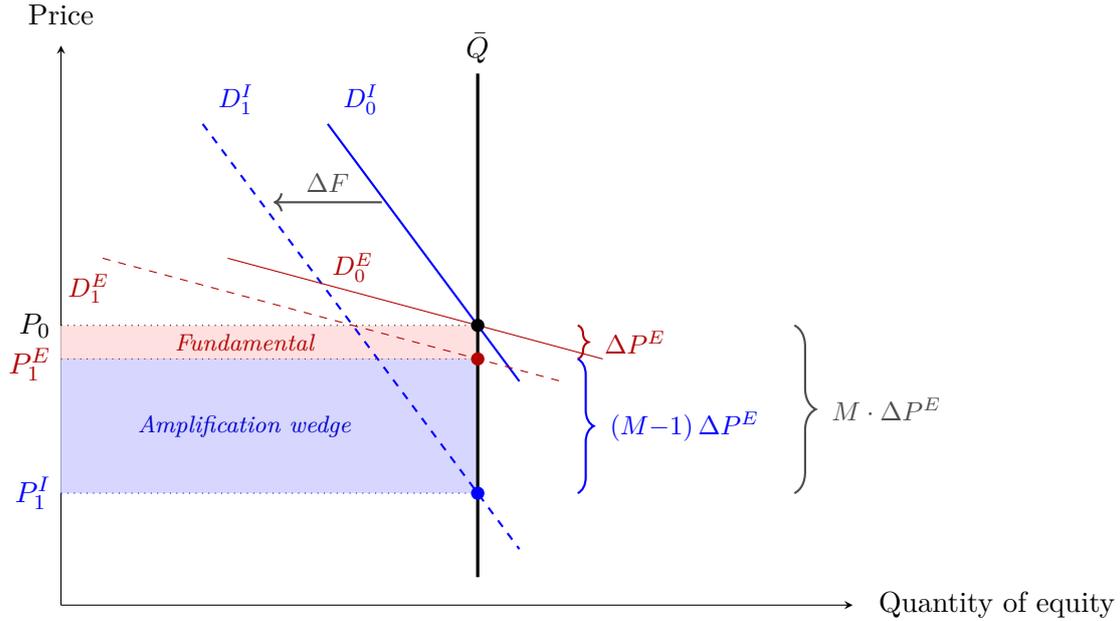
\begin{figure}[t]
\centering
\begin{tikzpicture}[scale=1.0]
  \begin{axis}[
    width=12cm, height=9cm,
    xlabel={Quantity of equity},
    ylabel={Price},
    xmin=0, xmax=9.5, ymin=0, ymax=10,
    xtick=\empty, ytick=\empty,
    axis lines=left,
    clip=false,
    every axis x label/.style={at={(ticklabel* cs:1.02)},
      anchor=west},
    every axis y label/.style={at={(ticklabel* cs:1.02)},
      anchor=south},
  ]
    \fill[red!15, opacity=0.8] (axis cs:0,4.4) rectangle (axis cs:5,5);
    \fill[blue!20, opacity=0.8] (axis cs:0,2) rectangle (axis cs:5,4.4);

    \addplot[very thick, black] coordinates {(5,0.5) (5,9.5)};
    \node[above] at (axis cs:5,9.5) {$\bar{Q}$};

    \addplot[thick, blue] coordinates {(3.2,8.6) (5.5,4)};
    \node[blue, above, font=\small] at (axis cs:3.6,8.6) {$D_0^I$};
    \addplot[thick, blue, dashed] coordinates {(1.7,8.6) (5.5,1)};
    \node[blue, above, font=\small] at (axis cs:2.1,8.6) {$D_1^I$};

    \addplot[thin, red!70!black] coordinates {(2,6.2) (6.5,4.4)};
    \node[red!70!black, above, font=\small] at (axis cs:3.5,5.6) {$D_0^E$};
    \addplot[thin, red!70!black, dashed] coordinates {(0.5,6.2) (6,4.0)};
    \node[red!70!black, below left, font=\small] at (axis cs:0.7,6.1) {$D_1^E$};

    \addplot[thin, black, dotted] coordinates {(0,5) (5,5)};
    \node[left] at (axis cs:0,5) {$P_0$};

    \addplot[thin, red!70!black, dotted] coordinates {(0,4.4) (5,4.4)};
    \node[left, red!70!black, yshift=-2pt] at (axis cs:0,4.4) {$P_1^E$};

    \addplot[thin, blue, dotted] coordinates {(0,2) (5,2)};
    \node[left, blue] at (axis cs:0,2) {$P_1^I$};

    \draw[->, thick, black!70] (axis cs:3.85,7.2) -- (axis cs:2.55,7.2);
    \node[above, black!70, font=\small] at (axis cs:3.2,7.2) {$\Delta F$};

    \draw[decorate, decoration={brace, amplitude=4pt, mirror},
      thick, red!70!black]
      (axis cs:6.2,4.4) -- (axis cs:6.2,5)
      node[midway, right=6pt, red!70!black, font=\small]
      {$\Delta P^E$};

    \draw[decorate, decoration={brace, amplitude=6pt, mirror},
      thick, blue]
      (axis cs:6.2,2) -- (axis cs:6.2,4.4)
      node[midway, right=8pt, blue, font=\small]
      {$(M{-}1)\,\Delta P^E$};

    \draw[decorate, decoration={brace, amplitude=8pt, mirror},
      thick, black!70]
      (axis cs:8.8,2) -- (axis cs:8.8,5)
      node[midway, right=10pt, black!70, font=\small]
      {$M \cdot \Delta P^E$};

    \node[red!70!black, font=\footnotesize\itshape] at (axis cs:2.2,4.7)
      {Fundamental};
    \node[blue!80!black, font=\footnotesize\itshape] at (axis cs:2.2,3.2)
      {Amplification wedge};

    \fill[black] (axis cs:5,5) circle (2.5pt);
    \fill[red!70!black] (axis cs:5,4.4) circle (2.5pt);
    \fill[blue] (axis cs:5,2) circle (2.5pt);

  \end{axis}
\end{tikzpicture}
\caption{Decomposition of the price impact of a wealth-tax-induced
demand shift.  Supply is fixed at $\bar{Q}$.  The wealth tax shifts
demand left by $\Delta F = \tau_w \phi P_0$.  The total price decline
from $P_0$ to $P_1^I$ decomposes into two components: (i)~the
\emph{fundamental effect} (light red), equal to $\Delta P^E = \tau_w
\phi$, which is the decline that would occur under perfectly elastic
demand; and (ii)~the \emph{amplification wedge} (light blue), equal
to $(M{-}1)\,\Delta P^E$, which is the excess decline due to market
inelasticity.  The total impact is $M$ times the fundamental effect,
where $M \approx 5$ is the Gabaix--Koijen multiplier.
The diagram shows the full tax liability as the demand shift;
actual equity outflows are smaller once dividend and liquid-income
channels are accounted for (see \Cref{sec:imh_calibration}).}
\label{fig:price_wedge}
\end{figure}

\begin{remark}[Incidence on non-taxed investors]
\label{rem:incidence}
The price decline falls on \emph{all} equity holders, not only those
subject to the wealth tax.  In a small open economy such as Norway,
where foreign investors hold a large share of the equity market, a
substantial fraction of the price impact is borne by non-taxed
investors.  This creates an incidence shift: the effective burden of the
wealth tax is partly ``exported'' through the price channel.
Conversely, in a closed economy or one where only domestic investors
hold equities, the price decline feeds back into the tax base, reducing
future tax revenue.
\end{remark}

\subsection{Calibration}\label{sec:imh_calibration}

A back-of-the-envelope calculation illustrates the potential magnitude.
Consider a wealth tax at rate $\tw = 1\%$ applied to equity holdings
that constitute $\phi = 20\%$ of total market capitalisation (the
remainder being held by tax-exempt institutions and foreign investors).
Under the conservative assumption that the entire tax liability is met
by selling equities (refined below), the tax-induced flow as a fraction
of market capitalisation is
\begin{equation}
  \frac{\Delta F}{P_{\text{eq}}} = \tw \cdot \phi = 0.01 \times 0.20
  = 0.002 = 0.2\%.
\end{equation}
Under frictionless markets ($M = 1$), prices decline by 0.2\%---a
negligible effect.  Under the inelastic markets hypothesis ($M = 5$),
prices decline by
\begin{equation}
  \frac{\Delta P}{P} = -5 \times 0.002 = -1.0\%.
\end{equation}
A one percent annual price decline, compounded over the holding period,
represents a meaningful reduction in wealth for all equity holders.

\Cref{tab:sensitivity_M} shows the sensitivity of this estimate to the
multiplier across the range reported in the literature.

\begin{table}[h]
\centering\small
\begin{tabular}{lccc}
\toprule
& \multicolumn{3}{c}{Multiplier $M$} \\
\cmidrule(lr){2-4}
& 3 (low) & 5 (central) & 8 (high) \\
\midrule
$\Delta P/P$ (Norwegian, $\tw = 1\%$, $\phi = 0.20$) & $-0.6\%$ & $-1.0\%$ & $-1.6\%$ \\
$\Delta P/P$ (Zucman, $\tw = 2\%$, $\phi_B = 0.08$)  & $-0.5\%$ & $-0.8\%$ & $-1.3\%$ \\
\bottomrule
\end{tabular}
\caption{Sensitivity of annual equilibrium price impact to the
Gabaix--Koijen multiplier $M$.  Central estimate $M = 5$; range 3--8
from \citet{GabaixKoijen2021}.}
\label{tab:sensitivity_M}
\end{table}

Several caveats apply.  First, the Gabaix--Koijen multiplier is
estimated for the US equity market; no published estimates exist for
global equity markets or specific European exchanges.  For the
Saez--Zucman global proposal, the relevant multiplier applies to the
world equity market.  On one hand, greater aggregate depth may lower
$M$ relative to the US estimate.  On the other, a coordinated global
tax eliminates the substitution channel through which selling pressure
in one market is absorbed by capital inflows from untaxed
jurisdictions, which could sustain or increase $M$.
\citet{KoijenYogo2019} estimate institution-level price impact
elasticities of 0.04--0.16 in their demand-system framework, confirming
that institutional demand is highly inelastic, but the mapping from
these micro elasticities to an aggregate flow multiplier is not
direct.  The net effect is ambiguous, and the sensitivity range in
\Cref{tab:sensitivity_M} should be interpreted accordingly.  Second, and most importantly, the model assumes that taxed investors
liquidate equities to meet their obligations.  In practice, investors
typically cover the wealth tax through a hierarchy of payment sources:
other liquid income (salary, interest, rental income), dividend
extractions from their companies, and---only as a last
resort---outright share sales
\citep[see][for a detailed discussion]{Froeseth2026}.
\citet{BerzinsBohrenStacescu2022} find that Norwegian private firms
increase dividend payouts in response to their owners' wealth tax
obligations, confirming that dividend extraction is the dominant
margin.  \citet{BjornebyEtAl2023} document the same pattern for
closely held firms, where dividend payments rise roughly in proportion
to the tax liability.  Using Colombian data, \citet{PeydroEtAl2025}
show that the firm-level response depends on the financial environment:
firms whose owners face the wealth tax increase leverage by
substituting bank debt for extracted equity, and those with credit
access actually increase investment---suggesting that the sign of the
real effect is ambiguous and depends on the availability of alternative
financing.  The effective equity outflow $\Delta F$ is
therefore substantially smaller than the full tax liability
$\tw\,\phi\,P_0\,Q$, and should be interpreted as the residual selling
pressure after dividends and liquid assets are exhausted.

Furthermore, in Norway the personal wealth tax is collected through
quarterly advance payments (\emph{forskuddsskatt}) during the
assessment year, not as a single year-end lump sum.  Whatever equity
selling does occur is therefore spread over four quarters, reducing the
instantaneous flow impact and giving markets time to absorb the
pressure gradually.
Third, a pre-announced, recurring wealth tax allows markets to
anticipate the flow, potentially front-loading the price adjustment.
Fourth, the square-root impact law $I(Q) \propto \sqrt{Q}$ suggests
that large aggregate flows may have less-than-proportional impact per
unit, moderating the linear approximation
in~\eqref{eq:price_impact_imh}.

Norway introduced a deferral scheme (\emph{utsettelsesordning}) for
wealth tax payments starting in 2026, allowing taxpayers to defer
payment for up to three years at a market-rate interest charge.  This
smooths the liquidity demand and may reduce the flow-amplification
effect, though it does not eliminate the fundamental valuation effect.

\subsection{Implications for neutrality}\label{sec:imh_neutrality}

The inelastic markets framework introduces a distinction between two
levels of neutrality that the partial equilibrium analysis conflates:

\begin{enumerate}
  \item \textbf{Portfolio-weight neutrality.}  Under CRRA preferences
  and uniform assessment, each investor's optimal portfolio
  \emph{weights} are independent of~$\tw$.  This result survives in
  general equilibrium: even after prices adjust, the optimal weight
  remains $w^* = (\mu' - \rf)/(\gamma\sigma'^2)$ evaluated at the new
  equilibrium $\mu'$, $\sigma'$, and the investor's portfolio share is
  unchanged.

  \item \textbf{Price neutrality.}  The tax is \emph{not} neutral with
  respect to equilibrium asset prices.  Under inelastic markets, the
  price impact is amplified by the multiplier $M$, creating a wedge
  between the pre-tax and post-tax equilibrium that exceeds the
  fundamental valuation change.
\end{enumerate}

The distinction matters for policy evaluation.  Portfolio-weight
neutrality means the tax does not distort relative asset allocation---an
important efficiency property.  But the lack of price neutrality means
the tax can depress asset values, reduce market capitalisation, and
shift incidence to non-taxed investors.  These general equilibrium
effects operate through a channel entirely absent from the partial
equilibrium analysis that dominates the wealth tax literature.

The interaction with the extensions developed earlier in this paper
creates additional channels.  Under non-CRRA preferences
(\Cref{sec:beyond_crra}), the price decline reduces investors' surplus
wealth $W - H$, increasing effective risk aversion and potentially
triggering further selling---a feedback loop between the
flow-amplification effect and the HARA distortion.  Under non-uniform
assessment (\Cref{sec:nonuniform}), asset-class-specific flows generate
differential price impacts across markets, with larger effects in
markets for illiquid or concentrated-ownership assets where the
effective multiplier may substantially exceed the aggregate estimate.

A full general equilibrium analysis integrating the Merton portfolio
framework with the Gabaix--Koijen demand system is beyond the scope of
this paper but represents an important direction for future work.

\section{Progressive Taxation and Threshold Effects}\label{sec:progressive}

The neutrality result of \citet{Froeseth2026} rests on the tax being
\emph{proportional}: the same rate $\tau_w$ applied to every unit of
wealth.  Every wealth tax implemented in practice departs from
proportionality by imposing a positive threshold (the Norwegian
\emph{bunnfradrag}) below which no tax is due, and often multiple
brackets above it.  This section formalises the portfolio distortion
created by progressive taxation and calibrates it to the Norwegian
system.

\subsection{General framework}\label{sec:prog_framework}

Consider a progressive wealth tax with $K$ brackets.  Let
$\bar{W}_1 < \bar{W}_2 < \cdots < \bar{W}_K$ be the bracket thresholds
and $\tau_1 < \tau_2 < \cdots < \tau_K$ the associated marginal rates,
with $\tau_0 = 0$ below $\bar{W}_1$.  The tax liability for an investor
with wealth $W$ in bracket $j$ (i.e.\ $\bar{W}_j < W \le \bar{W}_{j+1}$,
with $\bar{W}_{K+1} = \infty$) is
\begin{equation}\label{eq:prog_tax}
  T(W) \;=\; \sum_{k=1}^{j} (\tau_k - \tau_{k-1})\,
         \max\!\bigl(0,\, W - \bar{W}_k\bigr)
       \;=\; \tau_j\, W \;-\; R_j\,,
\end{equation}
where the \emph{cumulative rebate}
\begin{equation}\label{eq:rebate}
  R_j \;=\; \sum_{k=1}^{j} (\tau_k - \tau_{k-1})\,\bar{W}_k
\end{equation}
is the tax saving, relative to a proportional tax at the marginal rate
$\tau_j$, generated by the exemption and lower-rate brackets.  The
decomposition in~\eqref{eq:prog_tax} is the key observation: a
progressive tax equals a proportional tax at the marginal rate minus a
lump-sum rebate that depends only on the bracket structure, not on the
portfolio.

The effective average rate is
\begin{equation}\label{eq:avg_rate}
  \bar{\tau}(W) \;=\; \frac{T(W)}{W} \;=\; \tau_j - \frac{R_j}{W}\,,
\end{equation}
which is strictly increasing in $W$ and converges to the marginal rate
$\tau_j$ as $W \to \infty$.

\subsection{Portfolio distortion under CRRA}
\label{sec:prog_portfolio}

For an investor in bracket $j$ with $W \gg \bar{W}_j$, the wealth
dynamics are
\begin{equation}\label{eq:prog_dynamics}
  dW = \bigl[(r_f - \tau_j)\,W + w\,W\,(\mu - r_f)
       + R_j - c\bigr]\,dt + w\,W\,\sigma\,dZ\,.
\end{equation}
The proportional component $\tau_j\,W$ reduces all returns by $\tau_j$,
preserving neutrality of the excess return $\mu - r_f$.  The rebate
$R_j$ enters as a constant income flow, independent of the portfolio.
In the Merton continuous-time framework, a constant income flow $y$ is
equivalent to a riskless asset with present value $y/(r_f^a)$, where
$r_f^a = r_f - \tau_j$ is the after-tax risk-free rate
\citep{Merton1971}.

\begin{definition}[Tax shield]\label{def:tax_shield}
The \emph{tax shield} of a progressive wealth tax for an investor in
bracket $j$ is the present value of the cumulative rebate:
\begin{equation}\label{eq:tax_shield}
  H_\tau \;=\; \frac{R_j}{r_f - \tau_j}\,.
\end{equation}
\end{definition}

Under CRRA preferences, the optimal risky share of financial wealth is
determined by the ratio of total wealth (financial plus implicit) to
financial wealth alone.

\begin{proposition}[Progressive tax distortion]\label{prop:progressive}
Under a progressive wealth tax with bracket structure
$\{(\bar{W}_k, \tau_k)\}_{k=1}^K$, an investor with CRRA preferences,
risk aversion $\gamma$, and financial wealth $W$ in bracket $j$ holds
a risky portfolio share
\begin{equation}\label{eq:prog_wstar}
  w^* \;=\; \frac{\mu - r_f}{\gamma\,\sigma^2}\;
            \frac{W + H_\tau}{W}
      \;=\; w^*_{\mathrm{neutral}}\;\Bigl(1 + \frac{H_\tau}{W}\Bigr)\,,
\end{equation}
where $w^*_{\mathrm{neutral}} = (\mu - r_f)/(\gamma\sigma^2)$ is the
proportional-tax optimal share and $H_\tau$ is the tax shield
from~\eqref{eq:tax_shield}.
\end{proposition}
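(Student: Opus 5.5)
The plan mirrors the HARA computation of \Cref{sec:hara_portfolio}, with the sign of the floor reversed. The starting point is the decomposition~\eqref{eq:prog_tax}. On bracket $j$, taken far enough from $\bar{W}_j$ that the bracket does not change along relevant paths, the wealth dynamics are~\eqref{eq:prog_dynamics}. Here the progressive tax acts as a proportional tax at $\tau_j$ plus a deterministic income flow $R_j$.

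The first step is to absorb the rebate into an augmented wealth variable $\tilde W = W + H_\tau$, where $H_\tau = R_j/(r_f-\tau_j)$ as in \Cref{def:tax_shield}. Since $H_\tau$ is constant, $d\tilde W = dW$. Using $(r_f-\tau_j)H_\tau = R_j$, the drift becomes
\[
  (r_f-\tau_j)\tilde W + wW(\mu-r_f) - c ,
\]
and the diffusion is $wW\sigma\,dZ$. Defining the risky share of augmented wealth $\tilde w = wW/\tilde W$, the dynamics of $\tilde W$ are exactly those of a proportionally taxed investor, equation~\eqref{eq:heston_dW} with constant volatility and rate $\tau_j$. There is no income term.

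The second step is to guess the CRRA value function $J(W,t) = (W+H_\tau)^{1-\gamma} g(t)/(1-\gamma)$. This is the HARA form~\eqref{eq:hara_J} with $H$ replaced by $-H_\tau$, and it is justified by the homogeneity argument behind \Cref{prop:sv_neutrality}, now applied to $\tilde W$. The derivatives are
\[
  J_W = \tilde W^{-\gamma} g, \qquad J_{WW} = -\gamma\tilde W^{-\gamma-1} g .
\]
Substituting into the portfolio first-order condition $J_W W(\mu-r_f) + J_{WW} w\sigma^2 W^2 = 0$ gives
\[
  (W+H_\tau)(\mu-r_f) = \gamma w\sigma^2 W ,
\]
which rearranges to~\eqref{eq:prog_wstar}. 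One should also check that the ODE for $g$ closes, with $\tau_j$ entering only through the constant effective discount term. This confirms that the guess solves the HJB equation, and it reproduces the $\tau_j$-independence of $w^*_{\mathrm{neutral}}$.

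The main obstacle is justifying that the linear form $T(W) = \tau_j W - R_j$ holds along the whole optimal path. The true tax schedule is piecewise linear and kinked, so the exact value function is not globally of the shifted-CRRA form near the thresholds. I would state the result under the maintained assumption $W \gg \bar{W}_j$ from~\eqref{eq:prog_dynamics}, under which the bracket is treated as absorbing. I would also note that $r_f > \tau_j$ is needed for $H_\tau$ to be finite, and that $W + H_\tau > 0$ holds automatically since $R_j \ge 0$. A fully rigorous treatment would need a verification theorem or an explicit analysis of the regime switching at the kinks. I would flag that as beyond the statement and treat~\eqref{eq:prog_wstar} as the exact solution of the within-bracket problem.
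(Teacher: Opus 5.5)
Your proposal is correct and follows the paper's own idea. You capitalise the rebate $R_j$ as a riskless endowment $H_\tau = R_j/(r_f-\tau_j)$, apply the CRRA share to total wealth $W+H_\tau$, and convert back to a share of financial wealth $W$. The change of variables $\tilde W = W + H_\tau$ and the shifted-CRRA HJB check make explicit the step the paper only asserts via Merton's income-capitalisation result. Your caveat that the bracket must be treated as absorbing matches the paper's maintained assumption $W \gg \bar{W}_j$.
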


\begin{proof}
With the constant income flow $R_j$, the investor's total investable
wealth is $W + H_\tau$, of which $H_\tau$ is implicitly riskless
(the rebate is earned regardless of portfolio returns, conditional on
$W > \bar{W}_j$).  The CRRA optimal risky share of total wealth is
$w^*_{\mathrm{neutral}}$; expressing this as a share of financial
wealth $W$ gives~\eqref{eq:prog_wstar}.
\end{proof}

\Cref{prop:progressive} establishes three results.  First, the
progressive structure \emph{increases} the risky share relative to the
proportional benchmark.  The tax shield $H_\tau$ acts like a riskless
endowment---the exempted wealth generates a certain tax saving each
period---which tilts the portfolio toward risk.  Second, the distortion
is \emph{decreasing} in wealth:
\[
  \frac{\partial}{\partial W}\,\frac{H_\tau}{W}
  \;=\; -\frac{H_\tau}{W^2} \;<\; 0\,,
\]
so investors just above the threshold are most affected, and the effect
vanishes for $W \to \infty$.  Third, the distortion is
\emph{progressive}: it is largest for moderate-wealth investors (who
benefit most from the exemption relative to their wealth) and negligible
for the very wealthy.

\begin{remark}[Direction of distortion]
The threshold distortion runs in the \emph{opposite} direction from the
HARA distortion of Section~\ref{sec:beyond_crra}.  There, subsistence
consumption creates a floor wealth $H$ that \emph{reduces} risk-taking;
here, the tax exemption creates a shield $H_\tau$ that
\emph{increases} it.  Whether the net effect is conservative or
aggressive depends on the relative magnitude of $H$ and $H_\tau$.
\end{remark}

\subsection{Norwegian calibration (2026)}\label{sec:prog_norway}

The Norwegian wealth tax for 2026 has two brackets above a
threshold~\citep{Skatteetaten2026}:

\begin{center}
\begin{tabular}{lcccc}
\toprule
Bracket & Threshold & Municipal & State & Total rate \\
\midrule
Below threshold & $\bar{W}_1 = 1{,}900{,}000$ &
  \multicolumn{3}{c}{0\%} \\
Bracket 1 & $\bar{W}_1$ to $\bar{W}_2 = 21{,}500{,}000$ &
  0.35\% & 0.65\% & $\tau_1 = 1.00$\% \\
Bracket 2 & Above $\bar{W}_2$ &
  0.35\% & 0.75\% & $\tau_2 = 1.10$\% \\
\bottomrule
\end{tabular}
\end{center}
Thresholds are per person; married couples receive double.

\medskip
\noindent\textbf{Bracket~1 investors}
($1.9\text{M} < W \le 21.5\text{M}$ NOK).  The cumulative rebate is
$R_1 = \tau_1 \cdot \bar{W}_1 = 0.01 \times 1{,}900{,}000 =
19{,}000$ NOK per year.  With $r_f = 3\%$:
\[
  H_\tau = \frac{19{,}000}{0.03 - 0.01} = 950{,}000 \text{ NOK}\,.
\]
For an investor with $W = 5$M NOK, $H_\tau/W = 19\%$: the progressive
structure increases the optimal risky share by 19\% relative to the
proportional benchmark.  At $W = 15$M NOK, $H_\tau/W = 6.3\%$.

\medskip
\noindent\textbf{Bracket~2 investors} ($W > 21.5$M NOK).  The
cumulative rebate is $R_2 = R_1 + (\tau_2 - \tau_1)\,\bar{W}_2 =
19{,}000 + 0.001 \times 21{,}500{,}000 = 40{,}500$ NOK per year, and
\[
  H_\tau = \frac{40{,}500}{0.03 - 0.011} = 2{,}131{,}579 \text{ NOK}
  \approx 2.1\text{M NOK}\,.
\]
For $W = 30$M NOK, $H_\tau/W = 7.1\%$; for $W = 100$M, $H_\tau/W = 2.1\%$;
for $W = 1$B, $H_\tau/W = 0.2\%$.  The progressive distortion is thus
economically significant for moderate wealth but vanishes rapidly at the
top of the distribution.

\subsection{Interaction with non-homothetic preferences}
\label{sec:prog_hara}

Combining the progressive tax (\Cref{prop:progressive}) with HARA
preferences (\Cref{prop:hara_distortion} in Section~\ref{sec:beyond_crra}) gives a
portfolio share that reflects both the subsistence floor and the tax
shield:
\begin{equation}\label{eq:prog_hara}
  w^* \;=\; w^*_{\mathrm{neutral}}\;\frac{W - H + H_\tau}{W}\,,
\end{equation}
where $H = \zeta/(r_f - \tau_j)$ is the HARA floor wealth evaluated at
the after-tax risk-free rate of the investor's bracket, and $H_\tau$ is
the tax shield from~\eqref{eq:tax_shield}.

The two distortions partially offset: subsistence needs push the
portfolio toward safety ($-H$), while the tax exemption pushes it toward
risk ($+H_\tau$).  For Norwegian parameters with moderate subsistence
($\zeta = 100{,}000$ NOK per year, $r_f = 3\%$, $\tau_1 = 1\%$):
\[
  H = \frac{100{,}000}{0.02} = 5{,}000{,}000 \text{ NOK}\,,\qquad
  H_\tau = 950{,}000 \text{ NOK}\,.
\]
Since $H \gg H_\tau$, the HARA effect dominates for investors with
significant subsistence consumption.  However, for wealthy investors
with low subsistence-to-wealth ratios, the two effects are of comparable
magnitude and approximately cancel, restoring near-neutrality.

Figure~\ref{fig:shield_vs_hara} illustrates the opposing distortions.
The progressive tax shield pushes the risky share \emph{above} the CRRA
benchmark (blue curve), while the HARA subsistence floor pushes it
\emph{below} (red curve).  The combined effect (purple curve) lies
between: the shield partially offsets the HARA distortion, but the net
effect remains conservative because $H \gg H_\tau$ at Norwegian
parameters.  The shaded region between the HARA-only and combined curves
shows the magnitude of the progressive offset---the extent to which the
tax exemption ``recovers'' risk-taking capacity that the subsistence
floor removes.

\begin{figure}[t]
\centering
\begin{tikzpicture}[scale=1.0]
  \begin{axis}[
    width=12cm, height=8cm,
    xlabel={Wealth $W$ (NOK millions)},
    ylabel={Optimal risky share $w^*$},
    xmin=4, xmax=55,
    ymin=0, ymax=0.50,
    xtick={5,10,15,20,25,30,35,40,45,50},
    ytick={0,0.10,0.20,0.30,0.40,0.50},
    yticklabels={0,0.10,0.20,0.30,0.40,0.50},
    grid=major,
    grid style={gray!25},
    legend pos=south east,
    legend style={font=\small, draw=none, fill=white, fill opacity=0.8},
    every axis label/.style={font=\small},
    tick label style={font=\footnotesize},
    clip=false,
  ]

    \addplot[fill=green!12, draw=none, domain=5.3:55, samples=200,
      forget plot]
      {0.417*(1 - 4.05/x)} \closedcycle;
    \addplot[fill=white, draw=none, domain=5.3:55, samples=200,
      forget plot]
      {0.417*(1 - 5/x)} \closedcycle;

    \addplot[black, dashed, thick, domain=4:55, samples=2]
      {0.417};
    \addlegendentry{CRRA neutral}

    \addplot[blue!80!black, solid, very thick, domain=4:55, samples=200]
      {0.417*(1 + 0.95/x)};
    \addlegendentry{Progressive only ($+H_\tau$)}

    \addplot[red!80!black, solid, very thick, domain=5.3:55, samples=200]
      {0.417*(1 - 5/x)};
    \addlegendentry{HARA only ($-H$)}

    \addplot[violet!90!black, solid, very thick,
      domain=4.3:55, samples=200]
      {0.417*(1 - 4.05/x)};
    \addlegendentry{Combined ($-H + H_\tau$)}

    \draw[->, blue!80!black, thick]
      (axis cs:8,{0.417}) -- (axis cs:8,{0.417*(1+0.95/8)});
    \node[right, font=\footnotesize, text=blue!80!black]
      at (axis cs:8.2,{(0.417*(1+0.95/8)+0.417)/2+0.015})
      {shield $\uparrow$};

    \draw[->, red!80!black, thick]
      (axis cs:15,{0.417}) -- (axis cs:15,{0.417*(1-5/15)});
    \node[right, font=\footnotesize, text=red!80!black]
      at (axis cs:15.2,{(0.417+0.417*(1-5/15))/2})
      {floor $\downarrow$};

    \node[right, font=\footnotesize, text=green!50!black]
      at (axis cs:28,{0.417*(1-4.5/28)})
      {offset};

  \end{axis}
\end{tikzpicture}
\caption{Opposing portfolio distortions under progressive taxation with
HARA preferences.  The dashed line is the CRRA benchmark.  The blue
curve shows the progressive-only effect (CRRA investor with threshold):
the tax shield $H_\tau$ increases risk-taking.  The red curve shows the
HARA-only effect (proportional tax): subsistence floor $H$ reduces
risk-taking.  The purple curve combines both:
$w^* = w^*_{\mathrm{neutral}} \cdot (W - H + H_\tau)/W$.  The shaded
region shows the magnitude of the progressive offset.  Norwegian bracket~1
calibration: $H = 5$M~NOK, $H_\tau = 950{,}000$~NOK, same parameters as
Figure~\ref{fig:hara_distortion}.}
\label{fig:shield_vs_hara}
\end{figure}
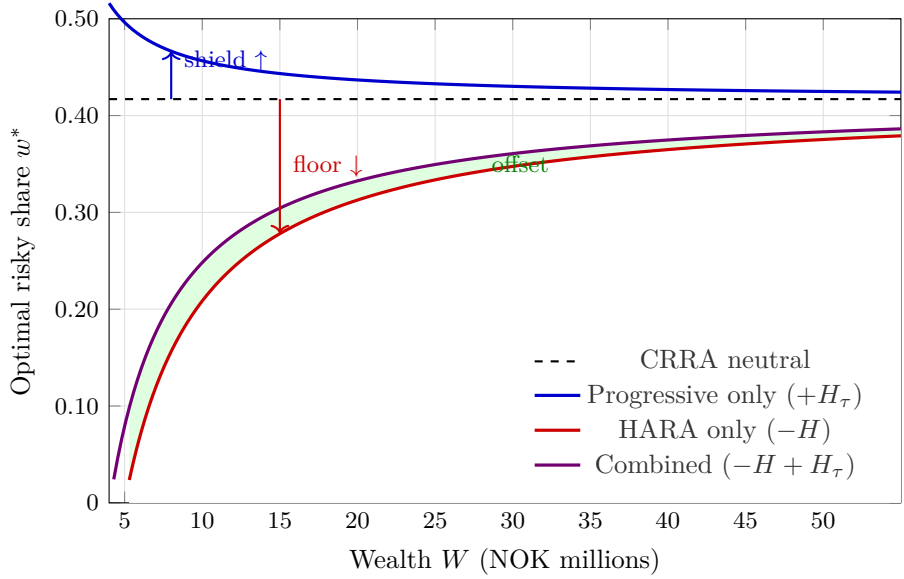

\subsection{Threshold bunching and behavioural responses}
\label{sec:prog_bunching}

The threshold at $\bar{W}_1$ creates a kink in the budget set: the
marginal tax rate jumps from 0 to $\tau_1$.  Standard bunching theory
\citep{Kleven2016} predicts that some investors will reduce their
taxable wealth to just below the threshold.  The empirical evidence is
substantial.

\citet{JakobsenEtAl2020} study the Danish progressive wealth tax
(rates $\sim$0.7--1.2\%, thresholds near the 98th percentile) using
administrative data.  They find sizable long-run elasticities of
taxable wealth at the top, driven by both real savings adjustments and
avoidance.  Their lifecycle model with ``residual wealth utility''
calibrates an elasticity of taxable wealth with respect to the
net-of-tax rate that implies meaningful revenue consequences from
threshold design.

\citet{GarbintiEtAl2024} provide a striking finding from the French
ISF: \emph{no bunching at pure tax-rate kinks}, but \emph{large, sharp
bunching at information-requirement thresholds}.  This suggests that the
portfolio distortion from bracket boundaries may be smaller than the
evasion response triggered by exemption thresholds---a distinction with
direct implications for Norwegian threshold design.

\citet{LondonoAvila2024} document clear bracket bunching in the
Colombian wealth tax, with two-fifths of the wealthiest 0.01\% hiding a
third of their wealth offshore.  The much larger behavioural response in
Colombia compared to Scandinavia underscores the role of third-party
reporting and enforcement in determining the practical relevance of
threshold effects.

In Norway, the combination of comprehensive third-party reporting (for
listed securities, bank deposits, and registered real estate) with
self-reporting for unlisted businesses creates a dual regime.  The
valuation discounts analysed in Section~\ref{sec:nonuniform} interact
with the threshold: an investor can reduce taxable wealth below
$\bar{W}_1$ by holding assets with low assessment fractions $\alpha_i$,
combining the non-uniform and progressive channels of non-neutrality.

\subsection{Policy implications}\label{sec:prog_policy}

The analysis yields three insights for threshold design.

First, a higher threshold reduces portfolio distortions for
moderate-wealth investors---precisely those most affected by the
progressive structure.  Raising $\bar{W}_1$ from 1.9M to, say, 5M NOK
would eliminate the tax shield for investors below 5M and reduce
$H_\tau/W$ for all remaining taxpayers.  This comes at a revenue cost
that depends on the wealth distribution; the concentrated ownership
structure of Norwegian wealth suggests the revenue loss may be moderate.

Second, the progressive structure eases the liquidity constraint.
Investors just above the threshold face a low average rate
$\bar{\tau}(W) = \tau_1(1 - \bar{W}_1/W)$, which mitigates the
consumption-saving and liquidity effects that would arise under a fully
proportional tax at the same marginal rate.  This is particularly
relevant for investors whose wealth is concentrated in illiquid assets
(real estate, private firms) where the wealth tax may force asset sales
or borrowing.

Third, the interaction between progressivity and the non-uniform
assessment of Section~\ref{sec:nonuniform} creates compound distortions.
An investor near the threshold faces strong incentives to hold assets
with low assessment fractions: primary housing ($\alpha = 0.25$) can
reduce taxable wealth below $\bar{W}_1$ for investors with gross wealth
well above the threshold.  To the extent that this encourages
owner-occupied housing over productive investment, the progressive
structure amplifies the misallocation from non-uniform assessment.

\section{Endogenous Labour Supply and Entrepreneurial Effort}
\label{sec:labour}

The preceding sections treat the investor's income as exogenous.  In
practice, the wealth tax may alter the incentive to supply labour or
exert entrepreneurial effort, creating a feedback loop between taxation,
wealth accumulation, and the labour-leisure margin.  This section
analyses the labour supply channel and its interaction with the
progressive structure of Section~\ref{sec:progressive}.

\subsection{Wealth effects under proportional taxation}
\label{sec:labour_proportional}

Consider an investor who chooses portfolio weight $w$, labour supply
$\ell$, and consumption $c$.  Labour earns income $y(\ell) = \omega\,\ell$
at a disutility cost $v(\ell)$, where $\omega$ is the wage rate and $v$
is increasing and convex.  The wealth dynamics become
\begin{equation}\label{eq:labour_dynamics}
  dW = \bigl[W\bigl(w(\mu - r_f) + r_f - \tau_w\bigr)
       + \omega\,\ell - c\bigr]\,dt + w\,W\,\sigma\,dZ\,.
\end{equation}
Under CRRA preferences over consumption and separable disutility of
labour, $U = u(c) - v(\ell)$ with $u(c) = c^{1-\gamma}/(1-\gamma)$,
the first-order conditions give
\begin{equation}\label{eq:labour_foc}
  v'(\ell) = \omega\,u'(c)\,.
\end{equation}
Since the wealth tax reduces lifetime wealth and hence consumption, it
raises marginal utility $u'(c)$, increasing optimal labour supply.

\begin{proposition}[Separability under proportional taxation]
\label{prop:labour_separability}
Under CRRA preferences with separable labour disutility and a
proportional wealth tax at rate $\tw$:
\begin{enumerate}
\item[(i)] The optimal portfolio weight is $w^* = (\mu -
  r_f)/(\gamma\sigma^2)$, independent of labour supply $\ell$.
\item[(ii)] The optimal labour supply satisfies $v'(\ell) = \omega\,
  u'(c^*)$, where $c^*$ is optimal consumption.  The wealth tax affects
  $\ell$ only through the income effect: it reduces $c^*$, raises
  $u'(c^*)$, and increases $\ell$.
\item[(iii)] The portfolio and labour decisions are decoupled.
\end{enumerate}
\end{proposition}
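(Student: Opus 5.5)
I will work directly with the HJB equation for the augmented problem with the three controls $(c, w, \ell)$ and the single state $W$, under the wealth dynamics~\eqref{eq:labour_dynamics}. With $J(W,t)$ the value function, the Hamiltonian is
\[
  u(c) - v(\ell) + J_W\bigl[W(w(\mu - r_f) + r_f - \tw) + \omega\ell - c\bigr] + \tfrac12 J_{WW} w^2 W^2 \sigma^2 .
\]
The key structural observation is that this expression is additively separable in the three controls. The portfolio weight $w$ enters only through $J_W W w(\mu-r_f) + \tfrac12 J_{WW}w^2W^2\sigma^2$. Labour enters only through $-v(\ell) + J_W \omega\ell$. Consumption enters only through $u(c) - J_W c$. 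The tax $\tw$ enters only through the term $-J_W \tw W$, which involves no control. The three first-order conditions therefore decouple pointwise in $(W,t)$, and this gives part~(iii) at the level of the static maximisation.

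For part~(ii), the $\ell$-FOC reads $v'(\ell) = \omega J_W$, and the $c$-FOC reads $u'(c^*) = J_W$. Combining them gives $v'(\ell^*) = \omega u'(c^*)$, which is~\eqref{eq:labour_foc}. The comparative static then follows because $v'$ is increasing, so $\ell^*$ is an increasing function of $u'(c^*)$. A higher $\tw$ lowers the drift of $W$ and hence the value function's level, which raises $J_W$. This is the income effect, and it matches the consumption argument in \Cref{sec:heston_consumption}. I would state this monotonicity as following from the standard comparison principle for the HJB, rather than proving it in full.

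For part~(i), the $w$-FOC gives $w^* = -J_W(\mu-r_f)/(W J_{WW}\sigma^2)$. The claim therefore reduces to showing that $-WJ_{WW}/J_W = \gamma$. This is the main obstacle. Labour income $\omega\ell$ is an additive, non-proportional flow, just like the rebate $R_j$ in \Cref{prop:progressive}. Human wealth therefore breaks the homogeneity of $J$ in $W$, and in the Merton benchmark $J$ depends on $W$ plus the present value of labour income.

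I would first try to handle this exactly as in the proof of \Cref{prop:progressive}. Treat the labour income stream as an implicit riskless asset valued at the after-tax rate, and write $J = (W+L)^{1-\gamma} g/(1-\gamma)$. The share of \emph{total} wealth $W+L$ held in the risky asset is then $(\mu-r_f)/(\gamma\sigma^2)$, and neither $\tw$ nor the level of $\ell$ affects it. I would then interpret (i) as a statement about this total-wealth share, or about the limit $W \gg L$ in which the correction $L/W$ vanishes. In either reading the statement is independent of $\ell$ in the sense that $\ell$ does not enter the portfolio FOC directly.

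The residual dependence through $L$ has to be flagged honestly. It is precisely the channel that reappears at progressive thresholds, and it is the content of the $\dagger$ footnote in \Cref{tab:results_summary}.
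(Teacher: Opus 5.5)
Your obstacle in part~(i) is genuine, and it is a gap in the paper's own proof rather than a defect of your approach. The paper simply asserts that the portfolio first-order condition ``depends only on'' $\mu-r_f$ and $\sigma^2$. That silently assumes $-WJ_{WW}/J_W=\gamma$, which is exactly the homogeneity of $J$ in $W$ that the additive flow $\omega\ell$ destroys. The paper's own Proposition~\ref{prop:progressive}, with $R_j$ in the role of $\omega\ell$, already shows what happens for a fixed labour flow. The risky share of financial wealth is $w^*_{\mathrm{neutral}}(1+L/W)$ with $L=\omega\ell/(r_f-\tw)$. This depends on $\ell$ and is even increasing in $\tw$, because human wealth is discounted at the after-tax rate but is not itself in the tax base. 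So (i), read literally as a share of financial wealth with $\ell>0$, cannot be proved, and retreating to the total-wealth share or to $W\gg L$ is the right call. You should, however, say explicitly that the financial-wealth share then depends on $\tw$ as well, so the tax is not portfolio-neutral in the sense the statement intends. Note also that this dependence is present under purely proportional taxation. It is therefore not the threshold channel of the $\dagger$ footnote in Table~\ref{tab:results_summary}; it contradicts that footnote rather than explaining it.

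Your own fallback still has two holes.

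First, $J=(W+L)^{1-\gamma}g/(1-\gamma)$ is exact only when $\ell$ is held fixed. With $\ell$ chosen optimally, your condition $v'(\ell^*)=\omega J_W(W,t)$ makes labour income decreasing in $W$ (since $J_{WW}<0$). Labour income therefore rises after bad returns, and human capital is a hedge rather than a riskless bond; this is the Bodie--Merton--Samuelson labour-flexibility effect. Take $v(\ell)=\ell^{1+\eta}/(1+\eta)$ and the after-tax pricing kernel $d\xi/\xi=-(r_f-\tw)\,dt-\tfrac{\mu-r_f}{\sigma}\,dZ$. The martingale method gives $c_t\propto\xi_t^{-1/\gamma}$ and $\ell_t\propto\xi_t^{1/\eta}$ up to deterministic factors. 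Risky dollar holdings are then $\tfrac{\mu-r_f}{\sigma^2}\bigl(\tfrac{W+L}{\gamma}+\tfrac{L}{\eta}\bigr)$, with $L$ the market value of the now stochastic labour stream. So even the total-wealth share is not $(\mu-r_f)/(\gamma\sigma^2)$ unless you net out human capital's implicit short position in the risky asset, and it depends on $v$. The same coupling is why (iii) survives only in the weak sense you state, namely pointwise separability of the Hamiltonian for given $(J_W,J_{WW})$: all three policies are tied together through $J$.

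Second, your sign argument in (ii) does not go through, because a comparison principle delivers a lower $J$, not a higher $J_W$. In the infinite-horizon no-labour benchmark, $J_W=(\nu W)^{-\gamma}$, where the Merton consumption--wealth ratio satisfies $\partial\nu/\partial\tw=(1-\gamma)/\gamma$. Hence $J_W$ rises with $\tw$ at fixed $W$ only if $\gamma>1$. With labour income, the rise in $L$ pushes $J_W$ the other way. The monotonicity in (ii) is therefore at best a statement along the wealth path (lower accumulated $W$) under extra assumptions. Neither your proposal nor the paper's one-line appeal to the tax acting ``through wealth'' establishes it.
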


\begin{proof}
The proportional tax reduces all asset returns uniformly, leaving the
excess return $\mu - r_f$ and the variance $\sigma^2$ unchanged.  The
portfolio first-order condition depends only on these, not on $\ell$.
The labour first-order condition~\eqref{eq:labour_foc} depends on the
marginal utility of consumption, which is affected by the tax through
wealth but not through the portfolio weight.
\end{proof}

\begin{remark}[Analogy to a lump-sum tax]
For portfolio purposes, the proportional wealth tax is equivalent to a
reduction in the risk-free rate.  For labour supply purposes, it acts
like a lump-sum tax proportional to wealth: it reduces disposable
resources without distorting the marginal return to effort.
\end{remark}

\subsection{The threshold notch and labour supply}
\label{sec:labour_threshold}

The progressive structure of Section~\ref{sec:progressive} breaks the
separability.  At the threshold $\bar{W}_1$, the marginal tax rate jumps
from 0 to $\tau_1$, creating a \emph{notch} in the budget set.  An
investor whose wealth would place them just above $\bar{W}_1$ faces a
discrete choice: accumulate marginally more wealth and pay $\tau_1$ on
the excess, or reduce accumulation (by working less, consuming more, or
shifting into exempt assets) to remain below the threshold.

The notch induces a \emph{substitution effect} absent under
proportional taxation.  Near the threshold, the effective marginal tax
rate on wealth accumulation is locally infinite (a small increase in
$W$ from $\bar{W}_1 - \epsilon$ to $\bar{W}_1 + \epsilon$ triggers a
discrete tax liability on the entire excess).  This creates a dominated
region: investors with wealth slightly above $\bar{W}_1$ would be better
off at exactly $\bar{W}_1$.

The width of the dominated region depends on the tax rate and discount
rate.  In the Norwegian system, the annual tax liability at $W =
\bar{W}_1 + \Delta W$ is $\tau_1 \cdot \Delta W$.  For $\Delta W =
100{,}000$ NOK, this is 1,000 NOK per year---a modest amount that
limits bunching incentives for most investors.  However, the present
value of the perpetual tax stream $\tau_1 \cdot \Delta W / (r_f -
\tau_1)$ can be substantial: at $r_f = 3\%$, it equals $\Delta W / 2$,
meaning that every 100,000 NOK above the threshold costs 50,000 NOK
in present-value tax.

For entrepreneurs whose business wealth places them near the threshold,
the labour supply response interacts with the portfolio response:
reducing effort reduces both labour income and the value of the
business, potentially bringing total wealth below $\bar{W}_1$.  This
creates a compound distortion that does not arise under either
proportional taxation or exogenous income.

\subsection{Entrepreneurial effort and human capital}
\label{sec:labour_hk}

A distinctive feature of wealth taxation---relative to income or capital
gains taxation---is its treatment of returns to entrepreneurial effort.
\citet{GuvenEtAl2023} formalise this through heterogeneous returns on
capital: high-ability entrepreneurs earn higher returns, and a wealth
tax---unlike a capital income tax---does not penalise the higher return
directly.  The tax liability depends on the stock of wealth, not on
the flow of returns, creating a ``use it or lose it'' incentive: the
tax erodes idle or low-return capital while leaving high-return
capital relatively less burdened in after-tax terms.

In their calibrated model, replacing capital income taxation with
revenue-neutral wealth taxation raises average welfare by approximately
7\% of consumption-equivalent, driven by reallocation of capital from
low-productivity to high-productivity entrepreneurs.  A subsequent
paper by the same authors extends the framework to endogenous
innovation effort: when entrepreneurial productivity is a choice
variable, the wealth tax further incentivises effort because
entrepreneurs keep more of the upside compared to capital income
taxation.

This mechanism interacts with the Norwegian tax system's treatment of
intangible assets.  \citet{BjornebyEtAl2023} document a positive causal
relationship between wealth tax liability and employment in closely held
firms, using Norwegian register data and tax reforms over 2007--2017.
A 100,000 NOK increase in wealth tax liability is associated with
approximately 50,000 NOK in additional wage costs.  The mechanism
operates through two channels: an income effect (owners work harder to
compensate for the tax) and a portfolio reallocation effect (intangible
assets in non-traded firms are tax-exempt, incentivising owners to
invest in human capital within their businesses rather than in taxed
financial assets).

This finding is consistent with the non-uniform taxation framework of
Section~\ref{sec:nonuniform}: intangible business assets carry an
implicit assessment fraction of $\alpha = 0$, making them the most
tax-favoured asset class.  The wealth tax thus creates an incentive to
substitute financial capital for human capital---a reallocation that
may be efficiency-enhancing if entrepreneurs have comparative advantage
in their own firms.

\begin{remark}[Feedback to inelastic markets]
The labour supply response also feeds back into the general equilibrium
price channel of Section~\ref{sec:inelastic}.  If the threshold notch
(Section~\ref{sec:labour_threshold}) reduces entrepreneurial effort and
wealth accumulation, it moderates the aggregate flow $\Delta F$ and
hence the price impact.  Conversely, the income effect that increases
effort raises wealth accumulation and amplifies the flow.  The net
direction depends on which effect dominates---an empirical question
that the Norwegian evidence (modest positive income elasticity) suggests
resolves in favour of a small amplification.
\end{remark}

\subsection{Empirical evidence}\label{sec:labour_evidence}

The empirical literature on wealth effects and labour supply provides
context for calibrating the magnitude of these channels.

\citet{ScheuerSlemrod2021} survey cross-country evidence and report that
in Norway, households increase taxable labour income by approximately
0.01 NOK per additional NOK of wealth tax---a positive but small
elasticity consistent with a modest income effect.  In Switzerland and
Sweden, no significant earnings response to wealth taxation was found.

The lottery and inheritance literatures provide estimates of pure wealth
effects.  \citet{PicchioEtAl2018} study Dutch lottery winners and find
small but statistically significant reductions in labour earnings, with
winners reducing hours but rarely exiting the labour force.
Inheritance studies find that women reduce labour supply by
approximately 1.5 hours per week upon receiving a bequest, while men's
supply is largely unaffected.  These wealth effects run in the opposite
direction from the wealth tax (which \emph{reduces} wealth and hence
\emph{increases} labour supply), but their modest magnitude suggests
that the labour supply channel is unlikely to dominate portfolio and
price effects.

\citet{StraubWerning2020} overturn the classical Chamley--Judd result
that long-run capital taxation should be zero.  They show that when the
intertemporal elasticity of substitution is below one---the empirically
relevant range---the optimal long-run capital tax rate is positive and
significant.  This provides theoretical support for wealth taxation as
part of an optimal tax system, even when labour supply is endogenous.

\subsection{Policy implications}\label{sec:labour_policy}

The labour supply analysis reinforces and qualifies the findings of
earlier sections.

First, a proportional wealth tax distorts labour supply only through
the income effect, which operates in the ``benign'' direction of
increasing effort.  This is a welfare-relevant consideration that
partially offsets the deadweight loss from any portfolio distortion.

Second, the progressive structure introduces an additional substitution
effect at the threshold that can reduce effort for investors near
$\bar{W}_1$.  A higher threshold, as discussed in
Section~\ref{sec:prog_policy}, would eliminate this notch for
moderate-wealth investors and confine the substitution effect to
wealthier taxpayers for whom the threshold is less binding.

Third, the Norwegian evidence that wealth taxation increases
employment in closely held firms suggests that the human capital
reallocation channel may quantitatively dominate the standard labour
supply distortion---at least in a system with substantial asset-class
exemptions.  This is a second-best argument: the non-uniform assessment
that distorts portfolio choice (Section~\ref{sec:nonuniform}) may
simultaneously improve labour allocation by making human capital
investment tax-favoured.

Fourth, the ``use it or lose it'' mechanism of \citet{GuvenEtAl2023}
implies that the efficiency cost of wealth taxation is lower than that
of capital income taxation when entrepreneurial returns are
heterogeneous.  This is relevant for the design of the overall tax mix:
the marginal efficiency cost of raising revenue through the wealth tax
may be lower than commonly assumed, particularly at moderate rates.

\section{Application: A Global Minimum Wealth Tax}\label{sec:zucman}

Two recent proposals illustrate contrasting approaches to wealth
taxation at the top of the distribution.  \citet{Zucman2024}, developed
for the G20 Brazilian presidency, advocates a global minimum wealth tax
of 2\% on net wealth exceeding approximately \$1 billion, targeting the
roughly 3,000 individuals who currently pay effective tax rates of
0.2--0.3\% of their wealth; the estimated revenue is \$200--250 billion
annually.  In France, the National Assembly adopted a domestic variant
in February 2025: a minimum effective tax of 2\% on net wealth exceeding
\texteuro100 million, affecting approximately 1,800 households
\citep{FranceNA2025}.  Although both share the 2\% rate and Zucman's
intellectual framework, they differ in threshold, scope, and mechanism
in ways that activate different non-neutrality channels.  This section
evaluates both proposals through the lens of the extensions developed in
Sections~\ref{sec:beyond_crra}--\ref{sec:labour}.

\subsection{The global proposal: assessment through the extension framework}
\label{sec:zucman_assessment}

\paragraph{Non-homothetic preferences (Section~\ref{sec:beyond_crra}).}
For billionaires, wealth vastly exceeds any plausible subsistence floor:
$W \gg H$, so $H/W \approx 0$.  The HARA distortion of
\Cref{prop:hara_distortion} is negligible; CRRA is an excellent
approximation.  The portfolio distortion from non-homothetic preferences
is a non-issue for the target population.

\paragraph{Non-uniform assessment (Section~\ref{sec:nonuniform}).}
A defining design feature of the Saez--Zucman proposal is comprehensive
market-value assessment: all assets are taxed at market value with no
valuation discounts.  This eliminates the non-uniform channel entirely.
The portfolio distortion $\Delta w^*$ from
\Cref{prop:nonuniform} is zero when $\alpha_i = 1$ for all $i$.  By
contrast, the Norwegian system's extensive valuation discounts
(Section~\ref{sec:norway_calibration}) create substantial tilts toward
housing and away from deposits.  The Saez--Zucman design thus avoids a
major source of non-neutrality present in existing wealth taxes.

\paragraph{Progressive structure (Section~\ref{sec:progressive}).}
The extremely high threshold ($\sim$\$1 billion) means the tax shield
from \Cref{prop:progressive} is negligible relative to wealth.  At a
2\% rate with $r_f = 3\%$:
\[
  H_\tau = \frac{0.02 \times 10^9}{0.03 - 0.02} = 2 \times 10^9\,,
\]
so $H_\tau/W = 2$ for $W = \$1$B (the threshold investor) but
$H_\tau/W = 0.02$ for $W = \$100$B.  While the threshold effect is
mechanically large at the boundary, the target population is
concentrated well above it.  Moreover, the proposal is designed as a
\emph{minimum} tax---a top-up on existing obligations---so the effective
threshold is even higher for investors already paying wealth or capital
taxes.

\paragraph{Inelastic markets (Section~\ref{sec:inelastic}).}
This is the channel most relevant to the proposal.  Billionaires hold
concentrated equity positions, often representing controlling stakes in
publicly traded firms.  As a conservative upper bound (assuming
the full liability is met by equity sales; see
Section~\ref{sec:imh_calibration} for the payment-hierarchy refinement),
the aggregate tax-induced flow is
\[
  \frac{\Delta F}{P_{\mathrm{eq}}}
  = \tau_w \cdot \phi_B\,,
\]
where $\phi_B$ is the fraction of global equity market capitalisation
held by the $\sim$3,000 affected individuals.  Global billionaire
wealth is approximately \$14 trillion (2024), of which roughly 60\%
is in equities.  Against a global equity market capitalisation of
$\sim$\$110 trillion, $\phi_B \approx 8.4/110 \approx 0.08$.  The
annual flow-to-capitalisation ratio is
\[
  \frac{\Delta F}{P_{\mathrm{eq}}}
  = 0.02 \times 0.08 = 0.0016 = 0.16\%\,.
\]
Under the inelastic markets hypothesis with $M = 5$, the annual price
impact is
\[
  \frac{\Delta P}{P} = -M \cdot 0.16\% = -0.8\%\,.
\]
This is a non-trivial but modest annual effect.  Compounded over a
holding period and borne by \emph{all} equity holders (not only the
taxed billionaires), it represents a general equilibrium cost of the
proposal that the partial-equilibrium revenue estimate does not capture.

Two caveats apply.  First, the estimate assumes that billionaires
liquidate equities to pay the tax, whereas in practice they may use
dividends, borrow against holdings, or sell other assets.  Second, the
multiplier $M$ is estimated for the US equity market; the effective
multiplier for the concentrated, often illiquid holdings typical of
billionaire portfolios may differ substantially.

\paragraph{Labour supply (Section~\ref{sec:labour}).}
The income effect of a 2\% annual tax on multi-billion-dollar wealth
is large in absolute terms (\$20M per year on \$1B) but small relative
to the wealth base.  The ``use it or lose it'' mechanism is directly
relevant: a 2\% annual tax erodes the wealth of passive heirs
($\sim$20\% per decade in real terms) while imposing a lower effective
burden on high-return entrepreneurs.  This selection effect is a
feature, not a bug, of the proposal: it shifts wealth from low-return
to high-return holders over time.

\subsection{The French variant: a national minimum wealth tax}
\label{sec:france_assessment}

The French proposal shares the 2\% rate and minimum-tax logic of the
global blueprint but differs in three dimensions that alter the
non-neutrality profile.

\paragraph{Lower threshold (\texteuro100M vs \$1B).}
The progressive channel is substantially more relevant.  At the French
threshold with $r_f = 3\%$:
\[
  H_\tau = \frac{0.02 \times 10^8}{0.03 - 0.02} = 2 \times 10^8
  = \text{\texteuro200M}\,,
\]
so $H_\tau/W = 2$ for the threshold investor ($W = \text{\texteuro100M}$)
and $H_\tau/W = 0.2$ even at $W = \text{\texteuro1B}$.
By \Cref{prop:progressive}, the tax shield increases risk-taking for
investors near the threshold, and the effect remains quantitatively
important across a larger fraction of the affected population than under
the global proposal.  Combined with the HARA channel---still small at
this wealth level ($H/W < 0.01$) but non-negligible in percentage
terms---the net distortion is larger than for billionaires.

\paragraph{Minimum tax mechanism.}
The French proposal is explicitly a top-up: investors already paying at
least 2\% of their wealth through income, capital gains, and property
taxes owe nothing additional.  This creates a complementarity between
wealth taxation and income realisation that is absent from a pure
wealth tax.  In our framework, the effective wealth tax rate is
\[
  \tau_w^{\text{eff}}(W, Y) = \max\!\bigl(0,\;
    0.02 - T_{\text{existing}}(Y)/W\bigr)\,,
\]
where $T_{\text{existing}}(Y)$ is total existing tax paid and $Y$
denotes realised income.  This has two implications for portfolio choice.
First, investors with high-dividend or high-turnover portfolios face
a lower effective wealth tax, which reduces the non-uniform assessment
channel for yield-generating assets.  Second, the top-up structure
creates an incentive to realise income up to the 2\% threshold,
reversing the usual lock-in effect of capital gains taxation---a
behavioural response outside the scope of our static framework but
worth noting.

\paragraph{Scope, portfolio geography, and market impact.}
The tax applies to the \emph{global} assets of French tax residents,
not only to French-domiciled holdings.  With $\sim$1,800 affected
households, the aggregate wealth subject to the tax is approximately
\texteuro400--500 billion, of which roughly 60\% is in equities
($\sim$\texteuro270 billion).  These equity holdings are
internationally diversified: French ultra-high-net-worth portfolios
typically have substantial exposure to US, European, and emerging
markets, with a home-bias share in French-listed equities that we
denote $h$.  The tax-induced selling flow is therefore split across
multiple markets.

Against global equity capitalisation of $\sim$\texteuro100 trillion,
the aggregate flow ratio is modest:
\[
  \frac{\Delta F_{\text{global}}}{P_{\text{eq}}^{\text{global}}}
  = \frac{0.02 \times 270}{100{,}000} \approx 0.005\%\,,
\]
an order of magnitude smaller than the global Saez--Zucman proposal.
However, with home bias the flow is disproportionately concentrated
in French equities.  If $h = 0.4$ (a plausible estimate for
concentrated family holdings), the Euronext-specific flow ratio is
\[
  \frac{\Delta F_{\text{FR}}}{P_{\text{eq}}^{\text{FR}}}
  = \frac{0.02 \times 0.4 \times 270}{3{,}500} \approx 0.06\%\,,
\]
where \texteuro3.5 trillion is the Euronext Paris capitalisation.
With a multiplier $M = 5$, the local price impact is
$\Delta P/P \approx -0.3\%$---smaller than the $-0.8\%$ global
estimate, reflecting the dilution of selling pressure across
international markets.  For individual stocks in which affected
taxpayers hold controlling stakes, the local impact could be
substantially larger, particularly if stakes are illiquid and
represent a high fraction of the free float.

The \citet{Brulhart2022} migration channel is also more salient for a
national tax: the 24\% inter-cantonal response they estimate for the
Swiss wealth tax is likely a lower bound for cross-border migration
within the EU's single market, absent coordinated enforcement.  The
combination of global asset coverage with national jurisdiction creates
a tension: the tax base is mobile even if the assets themselves are not.

\subsection{Comparison with existing wealth taxes}
\label{sec:zucman_comparison}

Table~\ref{tab:comparison} summarises how the extension channels apply
to the Norwegian system, the global Saez--Zucman proposal, and the
French variant.

\begin{table}[t]
\centering
\small
\caption{Non-neutrality channels under three wealth tax designs.}
\label{tab:comparison}
\begin{tabular}{lccc}
\toprule
Channel & Norwegian & Global (Zucman) & French variant \\
\midrule
Non-homothetic (HARA)
  & Moderate
  & Negligible
  & Small \\
  & ($H/W \sim 0.1$--$0.5$)
  & ($H/W \approx 0$)
  & ($H/W < 0.01$) \\[4pt]
Non-uniform assessment
  & Large
  & Zero
  & Reduced \\
  & (tilts to housing)
  & (market values)
  & (top-up offsets) \\[4pt]
Progressive threshold
  & Significant
  & Negligible
  & Significant \\
  & (near NOK~1.9M)
  & (threshold $\sim$\$1B)
  & (threshold \texteuro100M) \\[4pt]
Inelastic markets
  & Small, local
  & Moderate, global
  & Small--moderate \\
  & ($\phi \approx 0.2$)
  & ($\phi_B \approx 0.08$)
  & (diluted globally) \\[4pt]
Labour supply
  & Positive
  & ``Use it or lose it''
  & ``Use it or lose it'' \\
  & (human capital)
  & (selection effect)
  & + income realisation \\
\bottomrule
\end{tabular}
\end{table}

The comparison reveals a spectrum of design trade-offs.  The Norwegian
system imposes a low rate (1.0--1.1\%) on a broad population but
introduces substantial non-neutralities through valuation discounts,
progressive thresholds, and debt deductibility.  The global
Saez--Zucman proposal imposes a higher rate (2\%) on a very narrow
population and eliminates most non-neutrality channels through
comprehensive market-value assessment and an extremely high threshold;
the residual distortion is concentrated in the inelastic markets
channel.  The French variant occupies an intermediate position: its
lower threshold activates the progressive channel for a larger fraction
of affected taxpayers, while the top-up mechanism partially mitigates
the non-uniform assessment channel by crediting existing taxes against
the wealth tax liability.  All three designs share an irreducible
non-neutrality from the inelastic markets channel---an unavoidable
consequence of any tax that induces portfolio rebalancing in a market
with finite liquidity.

\subsection{Implementation challenges}\label{sec:zucman_implementation}

The practical feasibility of the proposal depends on two conditions
outside our model.

First, \emph{international coordination}.  Without universal adoption,
the proposal creates incentives for tax-base migration.
\citet{Brulhart2022} estimate that 24\% of the Swiss wealth tax
response comes from inter-cantonal migration; at the global level, the
relevant margin is jurisdiction shopping among sovereign states.
\citet{LondonoAvila2024} document that two-fifths of the wealthiest
Colombians hide a third of their wealth offshore, underscoring the
enforcement challenge.

Second, \emph{valuation of illiquid assets}.  The proposal requires
annual market-value assessment of private companies, art, real estate,
and other assets that do not trade on liquid markets.  This is precisely
the valuation problem that drives the non-uniform assessment of
Section~\ref{sec:nonuniform}: Norwegian valuation discounts exist in
part because market values are difficult to determine.  For listed
equities and bank deposits, the problem is solved by third-party
reporting; for private wealth, it remains the binding constraint on
implementation.

\section{Tax-Induced Migration}\label{sec:migration}

The preceding sections analyse how investors respond to wealth taxation
by adjusting portfolios, labour supply, and asset prices.  A more
radical response is to change tax jurisdiction entirely.  Under
European-style domicile-based taxation, a change of residence is
sufficient to eliminate the wealth tax liability; the progressive
threshold of Section~\ref{sec:progressive} then implies a
\emph{participation margin} at which investors exit the tax base
altogether.

\subsection{Migration as a threshold response}
\label{sec:migration_theory}

Consider an investor with wealth $W > \bar{W}_1$ (the first taxable
threshold) who faces a heterogeneous migration cost $c_i > 0$,
reflecting the disutility of relocation: social ties, language, business
location-specificity, family considerations, and regulatory frictions.
Under domicile-based taxation, the investor migrates if the present
value of future wealth tax liabilities exceeds the migration cost.

Within the progressive framework of Section~\ref{sec:progressive}, the
present value of the tax flow for an investor in bracket $j$ is
\begin{equation}\label{eq:migration_pv}
  \mathrm{PV}_{\tau}(W) = \frac{\tau_j \, W - R_j}{r_f - g + \tau_j}\,,
\end{equation}
where $g$ is the expected real growth rate of wealth and the denominator
reflects that the tax base grows at rate $g - \tau_j$ (net of wealth
erosion from the tax itself).  The investor migrates when
$\mathrm{PV}_{\tau}(W) > c_i$.

This yields a migration threshold
\begin{equation}\label{eq:migration_threshold}
  W_i^* = \frac{c_i(r_f - g + \tau_j) + R_j}{\tau_j}\,.
\end{equation}
Investors with $W > W_i^*$ migrate; those with $W < W_i^*$ stay.
Several features are immediate.

First, the migration threshold is \emph{decreasing} in $\tau_j$:
higher wealth tax rates lower the threshold at which migration becomes
attractive, expanding the set of investors who leave.  This is the
intensive margin that drives the empirical relationship between wealth
tax rates and out-migration.

Second, the progressive rebate $R_j$ \emph{raises} the migration
threshold, since part of the tax liability is offset by the cumulative
exemption.  A higher basic exemption ($\bar{W}_1$) therefore has a
dual effect: it eliminates the participation margin for investors near
the threshold (Section~\ref{sec:progressive}) \emph{and} raises the
migration cost--benefit ratio for investors above it.

Third, migration costs are high for most wealthy individuals.
\citet{Young2016} finds that US millionaire interstate migration is
just 2.4\% annually---below the general population rate of 2.9\%---even
though several US states levy no income tax.  The explanation is
\emph{embeddedness}: location-specific social capital, business ties,
family considerations, and lifecycle constraints make physical
relocation costly regardless of the tax saving.  These frictions apply
in any institutional context.

In the European setting, an additional asymmetry amplifies the
migration channel.  Under domicile-based taxation, a change of
residence is sufficient to eliminate the entire wealth tax liability;
the effective migration cost is $c_i$ alone.  Under a
citizenship-based system---as in the United States---the effective
cost becomes $c_i + \mathrm{PV}_{\text{exit}}$, where
$\mathrm{PV}_{\text{exit}}$ includes the present value of continued
tax obligations and any exit tax on unrealised gains.  This structural
difference suggests that European wealth tax migration responses may
exceed those observed in the US, even though the embeddedness frictions
identified by \citeauthor{Young2016} operate in both contexts.

Figure~\ref{fig:migration_threshold} illustrates the mechanism in
$(W, c_i)$ space.  Each line represents the indifference locus
$c_i = \mathrm{PV}_{\tau}(W)$ for a given tax rate: investors below the
line (low migration cost relative to tax burden) migrate; those above
stay.  All lines emanate from the exemption threshold
$(\bar{W}_1, 0)$, since at the threshold the tax liability is zero and
no migration incentive exists regardless of rate.  A tax increase from
0.85\% to 1.0\% rotates the line upward, expanding the ``migrate''
region: the shaded wedge between the two lines represents investors
who were previously in the ``stay'' region but are pushed into
migration by the rate increase.  At higher rates (2\%, the Zucman
proposal), the ``migrate'' region expands substantially, underscoring
the coordination argument of Section~\ref{sec:zucman}.

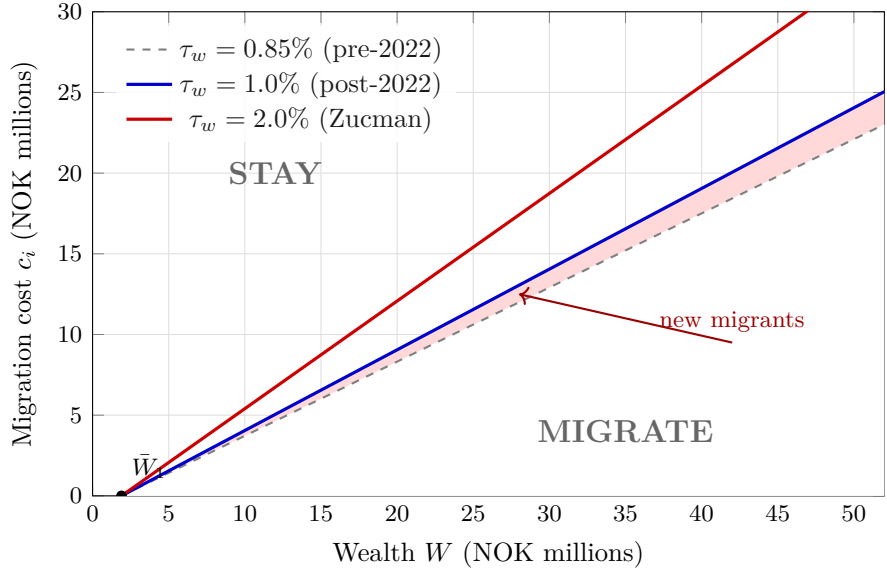
\begin{figure}[t]
\centering
\begin{tikzpicture}[scale=1.0]
  \begin{axis}[
    width=12cm, height=8cm,
    xlabel={Wealth $W$ (NOK millions)},
    ylabel={Migration cost $c_i$ (NOK millions)},
    xmin=0, xmax=52,
    ymin=0, ymax=30,
    xtick={0,5,10,15,20,25,30,35,40,45,50},
    ytick={0,5,10,15,20,25,30},
    grid=major,
    grid style={gray!25},
    legend pos=north west,
    legend style={font=\small, draw=none, fill=white, fill opacity=0.9},
    every axis label/.style={font=\small},
    tick label style={font=\footnotesize},
    clip=true,
  ]

    \addplot[fill=red!15, draw=none, domain=1.9:52, samples=100,
      forget plot]
      {0.5*x - 0.95} \closedcycle;
    \addplot[fill=white, draw=none, domain=1.9:52, samples=100,
      forget plot]
      {0.4595*x - 0.873} \closedcycle;

    \addplot[black!50, dashed, thick, domain=1.9:52, samples=2]
      {0.4595*x - 0.873};
    \addlegendentry{$\tau_w = 0.85\%$ (pre-2022)}

    \addplot[blue!80!black, solid, very thick, domain=1.9:52, samples=2]
      {0.5*x - 0.95};
    \addlegendentry{$\tau_w = 1.0\%$ (post-2022)}

    \addplot[red!80!black, solid, very thick, domain=1.9:52, samples=2]
      {0.667*x - 1.267};
    \addlegendentry{$\tau_w = 2.0\%$ (Zucman)}

    \node[font=\large\bfseries, text=black!60]
      at (axis cs:12,20) {STAY};
    \node[font=\large\bfseries, text=black!60]
      at (axis cs:35,4) {MIGRATE};

    \draw[->, thick, red!60!black]
      (axis cs:42,9.5)
      node[above, font=\footnotesize, text=red!60!black]
      {new migrants}
      -- (axis cs:28,12.5);

    \fill[black] (axis cs:1.9,0) circle (2pt);
    \node[above right, font=\footnotesize] at (axis cs:1.9,0.5)
      {$\bar{W}_1$};

  \end{axis}
\end{tikzpicture}
\caption{Migration decision in $(W, c_i)$ space.  Each line shows the
indifference locus $c_i = \mathrm{PV}_{\tau}(W)$: investors below the
line migrate; those above stay.  A tax increase from 0.85\% to 1.0\%
rotates the line upward, expanding the ``migrate'' region (shaded
wedge).  The 2\% Zucman rate further expands it.  All lines emanate
from the exemption threshold $\bar{W}_1 = 1.9$M~NOK.  Under
citizenship-based taxation (US), the effective cost shifts upward by
$\mathrm{PV}_{\text{exit}}$, compressing the ``migrate'' region.
Calibration: $r_f = 3\%$, $g = 2\%$.}
\label{fig:migration_threshold}
\end{figure}

\subsection{The Norwegian case: 2022--2024}
\label{sec:norway_migration}

Norway provides a contemporary case study that illustrates both the
migration response and the identification challenges.  The Støre
government increased the effective wealth tax burden through several
simultaneous channels beginning in 2022: the wealth tax rate rose
from 0.85\% to 1.1\%, the effective tax rate on dividends and capital
gains increased from approximately 31.7\% to 37.8\% (through the
\emph{oppjusteringsfaktor}, the multiplicative upward adjustment
applied before the 22\% ordinary income tax rate), and the five-year
expiration period for the exit tax on unrealised gains was abolished
in November 2022.

The behavioural response was swift.  According to the Norwegian Ministry
of Finance, 82 high-wealth individuals with combined net wealth of
approximately NOK~46 billion relocated in 2022--2023, more than the
previous 13 years combined.  The Civita think tank reports that 261
residents with wealth above NOK~10 million left in 2022, roughly double
the pre-reform baseline.  Switzerland---which has low cantonal wealth
taxes, no capital gains tax, and established Norwegian expatriate
communities---was the primary destination.

However, the identification of a \emph{wealth tax} effect is confounded
by at least three concurrent factors.

\paragraph{Confounding with dividend and capital gains taxation.}
The effective dividend tax rate increased by approximately 6 percentage
points simultaneously with the wealth tax increase.  For business
owners who extract dividends to cover wealth tax liabilities---a common
pattern documented by \citet{BjornebyEtAl2023}---the two taxes interact:
the wealth tax creates a liquidity need that triggers dividend
extraction, which is itself taxed at the higher rate.  Separating the
marginal contribution of each tax requires variation in one holding the
other constant, which the 2022 reform does not provide.

\paragraph{Confounding with the exit tax.}
The abolition of the five-year exit tax expiration in November 2022
changed the calculus for investors with large unrealised capital gains.
Under the old regime, emigration followed by a five-year holding period
eliminated both the wealth tax and the capital gains tax on accumulated
gains---a powerful joint incentive.  Some high-profile departures
occurred just before the November deadline, suggesting that the exit
tax change, rather than the wealth tax increase alone, was the binding
margin for investors with concentrated equity positions.

\paragraph{Confounding with political and structural factors.}
The change from a conservative to a centre-left government in 2021
brought broader shifts in economic policy, regulatory approach, and
political rhetoric.  Wealthy individuals may respond not only to
enacted tax changes but also to perceived policy direction: the
expectation of further increases, proposed regulations on private equity,
and public debate about wealth concentration may jointly affect the
relocation decision.  Disentangling these channels from the tax rate
itself requires careful empirical work beyond the scope of a
theoretical framework.

Figure~\ref{fig:norway_emigration} summarises the identification
challenge visually: the emigration spike in 2022 coincides with
simultaneous changes in the wealth tax rate, the effective dividend tax
rate, and the exit tax regime.  No single-variable causal attribution
is possible from the time series alone.

\begin{figure}[t]
\centering
\begin{tikzpicture}[scale=1.0]
  \begin{axis}[
    width=12cm, height=7.5cm,
    xlabel={Year},
    ylabel={Emigrants with wealth $>$10M NOK},
    axis y line*=left,
    ymin=0, ymax=320,
    xmin=2017.5, xmax=2024.5,
    xtick={2018,2019,2020,2021,2022,2023,2024},
    xticklabel style={/pgf/number format/1000 sep={}},
    ytick={0,50,100,150,200,250,300},
    grid=major,
    grid style={gray!20},
    every axis label/.style={font=\small},
    tick label style={font=\footnotesize},
    ylabel style={blue!70!black},
    clip=false,
  ]
    \addplot[ybar, bar width=20pt, fill=blue!30, draw=blue!60!black]
      coordinates {
      (2018,115) (2019,120) (2020,105) (2021,130)
      (2022,261) (2023,254) (2024,180)
    };

    \draw[red!70!black, very thick, dashed]
      (axis cs:2021.75,0) -- (axis cs:2021.75,310);
    \node[above, font=\tiny, text=red!70!black, rotate=90]
      at (axis cs:2021.6,155) {Støre govt};

    \node[font=\scriptsize, text=red!80!black, anchor=south east]
      at (axis cs:2021.55,275) {$\tau_w = 0.85\%$};
    \node[font=\scriptsize, text=red!80!black, anchor=south west]
      at (axis cs:2021.95,275) {$\tau_w = 1.1\%$};

  \end{axis}

  \begin{axis}[
    width=12cm, height=7.5cm,
    axis y line*=right,
    axis x line=none,
    ylabel={Eff.\ dividend tax rate (\%)},
    ymin=25, ymax=42,
    xmin=2017.5, xmax=2024.5,
    xtick=\empty,
    ytick={26,28,30,32,34,36,38,40},
    ylabel style={orange!70!black},
    every axis label/.style={font=\small},
    tick label style={font=\footnotesize},
    clip=false,
  ]
    \addplot[orange!80!black, very thick, mark=triangle*,
      mark size=2.5pt]
      coordinates {
      (2018,30.6) (2019,31.7) (2020,31.7) (2021,31.7)
      (2022,35.2) (2023,37.8) (2024,37.8)
    };
    \node[right, font=\scriptsize, text=orange!80!black]
      at (axis cs:2024,37.8) {37.8\%};

    \draw[->, thick, black!60] (axis cs:2023.2,30)
      -- (axis cs:2022.2,26.5);
    \node[right, font=\tiny, text=black!60] at (axis cs:2023.1,30.5)
      {5-yr exit rule abolished};

  \end{axis}

  \node[anchor=north west, draw=none, fill=white, fill opacity=0.9,
    text opacity=1, inner sep=4pt, font=\small]
    at (0.6,5.2) {%
      \begin{tabular}{@{}l@{}}
        \raisebox{1pt}{\tikz[baseline=-0.5ex]{%
          \fill[fill=blue!30, draw=blue!60!black]
          (0,0) rectangle (8pt,8pt);}}\;Emigrants ($>$10M NOK)\\[3pt]
        \raisebox{1pt}{\tikz[baseline=-0.5ex]{%
          \draw[orange!80!black, very thick] (0,3pt) -- (12pt,3pt);
          \fill[orange!80!black] (6pt,7.5pt) -- (3pt,1pt) -- (9pt,1pt) -- cycle;%
        }}\;Eff.\ dividend tax (\%, right axis)
      \end{tabular}%
    };
\end{tikzpicture}
\caption{Norwegian high-wealth emigration and concurrent tax changes,
2018--2024.  Bars show annual emigrants with wealth above NOK~10 million
(left axis; Civita estimates; 2024 figure is preliminary).  The line
shows the effective dividend tax rate (right axis).  The wealth tax rate
is annotated at the reform boundary ($0.85\% \to 1.1\%$).
The emigration spike in 2022 coincides with increases in both
rates and the abolition of the five-year exit tax expiration (arrow),
illustrating the identification challenge.}
\label{fig:norway_emigration}
\end{figure}
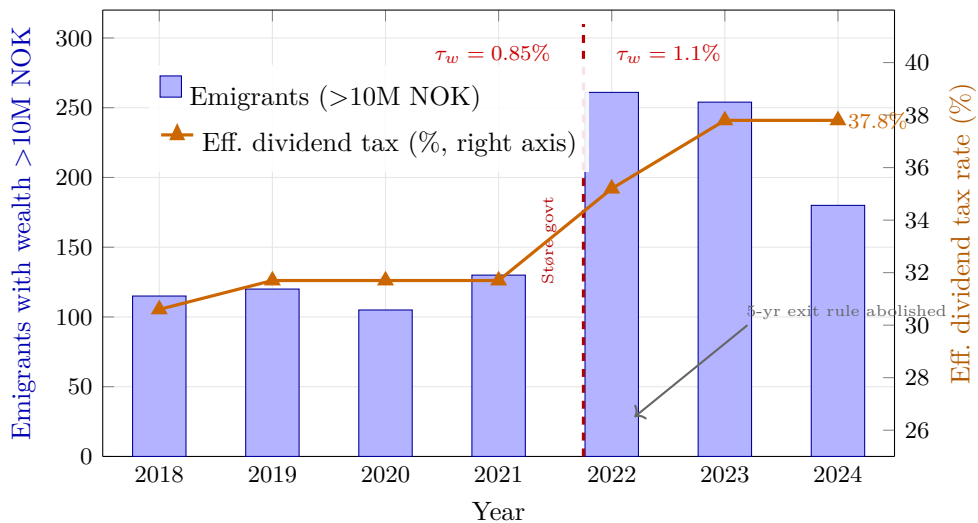

\subsection{Decomposing behavioural responses}
\label{sec:migration_decomposition}

The migration margin interacts with other behavioural channels.
\citet{Brulhart2022} decompose the Swiss wealth tax response into
three components: taxpayer migration (24\%), capitalisation into
housing prices (21\%), and evasion or avoidance (55\%).
\citet{Seim2017} finds that approximately one-third of the Swedish
response reflects underreporting.  In the Norwegian context,
\citet{IaconoSmedsvik2024} exploit the Bø municipality experiment
---in which a single municipality unilaterally reduced its municipal
wealth tax rate from 0.85\% to 0.35\% in 2021---and find that
taxpayer mobility accounts for 79\% of the increase in taxable wealth.

These decompositions suggest a hierarchy of responses. Evasion and
avoidance (including legal restructuring) are the first margin, as
they carry lower personal costs than physical relocation.  Migration
is the margin of last resort---activated when the tax burden is
sufficiently large relative to migration costs, consistent with the
threshold model of equation~\eqref{eq:migration_threshold}.  The
\citet{AgrawalEtAl2025} finding that Spanish mobility responds to
large discrete tax differentials (Madrid as a regional tax haven) but
not to small inter-regional variation supports this interpretation:
migration requires a sufficiently large $\mathrm{PV}_{\tau}(W) - c_i$
gap.

\subsection{Implications for the framework}
\label{sec:migration_implications}

The migration channel has two implications for the non-neutrality
analysis.

First, it introduces a \emph{selection effect} on the remaining tax
base.  If the highest-wealth investors migrate, the population
remaining in the tax jurisdiction is truncated from above.  This
changes the composition of the investor base in the inelastic markets
analysis (Section~\ref{sec:inelastic}): the investors who remain may
hold less concentrated, more diversified portfolios, potentially
reducing the aggregate demand multiplier.  Paradoxically, Norway's
wealth tax revenues \emph{increased} from NOK~27 billion (2022) to a
projected NOK~34 billion (2025) despite the exodus, because the
broader middle-wealth tax base remained intact and asset values
appreciated.

Second, the threat of migration creates an implicit constraint on
the wealth tax rate.  In a Tiebout framework, jurisdictional
competition for mobile tax bases limits the feasible tax rate to the
level at which the marginal migrant is indifferent.  This is precisely
the coordination argument for the Saez--Zucman global proposal
(\Cref{sec:zucman}): a minimum tax enforced across jurisdictions
eliminates the migration margin by removing the outside option.  The
failure of unilateral national proposals---such as the French variant
of Section~\ref{sec:france_assessment}, rejected in part due to
concerns about capital flight---illustrates the coordination constraint.

A full welfare analysis of the migration channel requires an empirical
estimate of the migration cost distribution $F(c)$, which determines
both the revenue loss from emigration and the deadweight loss from
distorted location decisions.  \citet{KlevenEtAl2020} emphasise that
migration elasticities are not structural parameters: they depend on
population characteristics, jurisdiction size, the degree of
international tax coordination, and the availability of evasion
alternatives.  This is fundamentally an empirical identification
problem that our theoretical framework can frame but not resolve.

\section{Conclusion}\label{sec:conclusion}

This paper has investigated the robustness of wealth tax neutrality along
two dimensions: the breadth of conditions under which neutrality holds,
and the channels through which it breaks.

On the first dimension, we have shown that portfolio neutrality extends
well beyond the geometric Brownian motion and location-scale settings of
\citet{Froeseth2026}.  Under CRRA preferences, neutrality is preserved
in stochastic volatility models---including the Heston model and general
Markov diffusions---where it applies to all intertemporal hedging demands,
not just the myopic component (\Cref{prop:sv_neutrality,prop:markov_neutrality}).
It is also preserved under Epstein--Zin recursive utility, where the
separation of risk aversion from the elasticity of intertemporal
substitution does not affect the homogeneity that drives the result
(\Cref{prop:ez}).  The operative condition remains CRRA: non-homothetic
preferences such as HARA break neutrality by making the effective risk
aversion wealth-dependent (\Cref{prop:hara_distortion}).

On the second dimension, we have identified four channels through which
implemented wealth taxes depart from neutrality, even when investors
have CRRA preferences.  Non-uniform assessment across asset classes
tilts portfolios toward assets with lower valuation fractions, a
distortion we calibrate to the Norwegian system
(\Cref{prop:nonuniform}).  General equilibrium effects in inelastic
markets amplify fundamental price adjustments through a demand
multiplier, creating a wedge between partial and general equilibrium
outcomes (\Cref{sec:inelastic}).  Progressive threshold structures
introduce a tax shield that increases risk-taking for investors near
the exemption threshold, an effect that operates in the opposite
direction from HARA and partially offsets it when both are present
(\Cref{prop:progressive}).  Endogenous labour supply is separable from
portfolio choice under proportional taxation but introduces additional
distortions at progressive thresholds (\Cref{prop:labour_separability}).
At the extreme, the progressive threshold generates a participation
margin: tax-induced migration (\Cref{sec:migration}).  Under European
domicile-based taxation, relocation eliminates the entire wealth tax
liability once the present value of that liability exceeds the
investor's migration cost.  The Norwegian experience after 2022
illustrates both the reality of this response and the difficulty of
isolating it from concurrent changes to dividend taxation, exit tax
rules, and the political environment.
The application to the Saez--Zucman global proposal and its French
national variant (\Cref{sec:zucman}) illustrates how design
choices---threshold level, scope, and coordination---activate
different subsets of these channels.

\subsection*{Directions for further work}

Several extensions remain open.

\paragraph{Theoretical.}
The multi-period consumption-saving problem with endogenous labour
supply has been studied in separate literatures but not in the
presence of a wealth tax.  A model that jointly determines
consumption, portfolio choice, and labour supply under progressive
wealth taxation would unify the channels analysed here.
General equilibrium with heterogeneous agents---in particular,
different wealth tax rates across brackets---would capture the
redistributive dynamics that a representative-agent framework cannot.
The interaction between wealth tax, capital gains tax, and income
tax in a joint optimal design problem is a natural next step, though
the presence of foreign investors subject to different tax regimes
adds substantial complexity.  Finally, extending the security market
fan of \citet{Froeseth2026} to multi-factor models would clarify how
the wealth tax interacts with factor premia.

\paragraph{Empirical.}
The non-uniform assessment channel can be tested directly using
Norwegian wealth registry data: the introduction and removal of
valuation discounts provide natural experiments for estimating
portfolio reallocation.  The progressive threshold channel predicts
bunching behaviour near the exemption boundary, which can be tested
using the administrative data employed by \citet{JakobsenEtAl2020}
for Denmark and \citet{GarbintiEtAl2024} for France.  The inelastic
markets channel yields testable predictions about asset price
responses to wealth tax reforms: \citet{Brulhart2022} provide a
template using Swiss cantonal variation, while the Norwegian system
offers variation across asset classes with different assessment
fractions.  On the labour supply side, the ``use it or lose it''
mechanism of \citet{GuvenEtAl2023} can be tested using the
relationship between entrepreneurial returns and wealth tax payments
in the Norwegian firm register.  The migration channel
(Section~\ref{sec:migration}) poses the hardest identification
challenge: \citet{JakobsenEtAl2024} estimate that a one percentage
point increase in the top wealth tax rate reduces the stock of wealthy
taxpayers by approximately 2\%, but as \citet{KlevenEtAl2020}
emphasise, migration elasticities are not structural parameters but
depend on population characteristics, jurisdiction size, the degree of
international tax coordination, and the availability of evasion
alternatives.  The Norwegian post-2022
experience, where wealth tax, dividend tax, and exit tax changes
occurred simultaneously alongside a change in political direction,
illustrates why isolating the wealth tax margin requires careful
empirical design---ideally exploiting variation in one instrument
while holding the others constant.  Finally, the empirical literature
on wealth taxation and firm behaviour is dominated by traditional
private firms; evidence on intangible-heavy, venture-capital-backed
firms---where book-value assessment creates the largest effective
discount and liquidity frictions are most acute---remains scarce,
despite these firms' prominence in the current policy debate.


\subsection*{Acknowledgements}
The author acknowledges the use of Claude (Anthropic) for assistance with
literature review, \LaTeX{} typesetting, mathematical exposition, and
editorial refinement, and Lemma (Axiomatic AI) for review and proof
checking. All substantive arguments, economic reasoning, and conclusions
are the author's own.

\appendix

\section{Leverage and debt deductibility: historical examples}
\label{app:leverage}

This appendix provides calibrated examples of the sheltering strategy
discussed in Section~\ref{sec:leverage}.

Under the system prevailing before Norway's proportional debt reduction
rules, debt was deductible at face value ($\beta_i = 1$) while assets
received valuation discounts.  Consider commercial property under the
old 45\% discount ($\alpha = 0.55$) financed at 80\% loan-to-value
($\ell = 0.80$):
\[
  \frac{\text{Taxable wealth}}{V} = 0.55 - 0.80 = -0.25.
\]
Each krone of leveraged commercial property \emph{reduced} the
investor's tax base by 0.25 kroner.  An investor holding NOK~10 million
in bank deposits (assessed at full value) could eliminate the associated
tax liability entirely by accumulating approximately NOK~40 million of
leveraged commercial property, whose negative taxable contribution
offsets the deposits.  Once the tax base reaches zero (its floor),
further leveraged property yields no additional tax benefit, creating a
kink in the effective marginal incentive.

For primary housing ($\alpha = 0.25$) with moderate leverage
($\ell = 0.60$), the sheltering capacity was even larger:
$0.25 - 0.60 = -0.35$ per krone of asset value.

Norway's proportional debt reduction rules now set $\beta_i = \alpha_i$
for discounted assets, eliminating negative tax bases.  For shares and
commercial property with a 20\% discount, the associated debt deduction
is reduced by 20\%, giving $\beta_i = 0.80$.  While this eliminates the
most egregious sheltering, the effective tax rate on equity in
discounted asset classes ($\tau_w \alpha_i$) remains below the statutory
rate, preserving a residual incentive to hold leveraged positions in
discounted assets.

\section{Time-scale dependence of stylised facts}
\label{app:timescale}

This appendix provides the detailed discussion of how the stylised facts
motivating the stochastic volatility extension vary across time horizons,
supplementing the summary in Section~\ref{sec:timescale}.

Among \citeauthor{Cont2001}'s (\citeyear{Cont2001}) eleven stylised
facts, the most dramatic departures from GBM are high-frequency
phenomena.  The inverse cubic power law for return tails
\citep{Gopikrishnan1999,Gabaix2003} attenuates at longer horizons via
aggregational Gaussianity, with a crossover to approximately Gaussian
tails at horizons of several weeks to a few months
\citep{BouchaudPotters2003}.  Volatility clustering weakens from
long-range dependence at daily frequency \citep{DingGrangerEngle1993}
to short-memory persistence at monthly and annual horizons, where the
Heston model's exponential autocorrelation is a reasonable approximation.

What \emph{does} persist at policy-relevant horizons is time-varying
expected returns \citep{Campbell1988}, regime-like volatility dynamics,
and the variance risk premium---precisely the features captured by the
general Markov diffusion framework of Section~\ref{sec:general_markov}.
The stochastic volatility extension is therefore valuable not because
high-frequency anomalies require it, but because it demonstrates that
neutrality extends to the empirically relevant features that \emph{do}
persist: time-varying investment opportunities, hedging demands, and
stochastic risk premia.  This also diminishes the urgency of extending
the result to jump-diffusion processes, since the power-law tails that
jumps would capture are primarily a high-frequency phenomenon.

\section{Empirical evidence on non-uniform assessment}
\label{app:empirical_nonuniform}

This appendix reviews the empirical literature on wealth tax effects
under non-uniform assessment, supplementing the summary in
Section~\ref{sec:nonuniform_empirical}.

\citet{Ring2024} provides the most direct empirical test of wealth tax
neutrality under non-uniform assessment.  Exploiting geographic
discontinuities in housing valuations created by Norway's 2010 reform,
Ring estimates the causal effect of the wealth tax on household saving
and portfolio allocation.  His central finding is that the wealth tax
has a positive effect on financial saving (elasticity of approximately
3.76), as households save to offset the tax, but the portfolio
\emph{composition} between risky and safe assets is unaffected when the
tax does not discriminate between them.  This is consistent with the
neutrality result under uniform assessment established in the main text.

However, when the assessment differentials create distinct after-tax
returns, portfolio distortions emerge.  \citet{FagerengGuisoRing2024}
study how investors respond to changes in the equity premium induced by
wealth tax reforms.  They find that households do adjust their risky
asset shares in response to differential tax treatment, but slowly:
full reoptimisation takes approximately five years.  This sluggish
adjustment is consistent with portfolio inertia and suggests that the
theoretical distortions derived in Section~\ref{sec:nonuniform}
represent a long-run equilibrium that is approached gradually.

\citet{Brulhart2022} study wealth tax responses in Switzerland, where
cantonal rate variation creates quasi-experimental conditions.  They
find that a one percentage point reduction in the wealth tax rate
increases reported taxable wealth by approximately 43\%, though much of
this response reflects taxpayer mobility between cantons and valuation
responses rather than real portfolio reallocation.

\bibliographystyle{plainnat}

\end{document}